\newtheorem{theorem}{Theorem}
\newtheorem{proposition}{Proposition}
\newtheorem{lemma}{Lemma}
\newtheorem{remark}{Remark}
\newtheorem{assumption}{Assumption}
\newenvironment{proof}{\hspace{0ex}\textsc{Proof}.\hspace{1ex}}{\hfill$\Box$\newline}
\newcommand{\Rmnum}[1]{\expandafter\@slowromancap\romannumeral #1@}
\begin{document}

\begin{frontmatter}

\title{On the Effects  of Modeling Errors on Distributed Continuous-time Filtering
} 


\author[address_a]{Xiaoxu Lyu}\ead{eelyuxiaoxu@ust.hk},
\author[address_b]{Shilei Li}\ead{shileili@bit.edu.cn},
\author[address_b]{Dawei Shi}\ead{daweishi@bit.edu.cn},  
\author[address_a]{Ling Shi}\ead{eesling@ust.hk}

\address[address_a]{Department of Electronic and Computer Engineering, Hong Kong University of Science and Technology, Clear Water
Bay, Kowloon, Hong Kong, China}  

\address[address_b]{School of Automation, Beijing Institute
of Technology, Beijing 100081, China}

%

\begin{keyword}                           
Distributed continuous-time filtering; Modeling error; Performance analysis; Divergence analysis; Sensor networks.   
\end{keyword}                             

\begin{abstract}                
This paper  offers a comprehensive performance  analysis
 of the distributed continuous-time filtering  in the presence of  modeling errors.
First,  we  introduce  two performance indices, namely the nominal performance index and the estimation error covariance.
By  leveraging   the nominal performance index and the Frobenius  norm
of the modeling deviations,  we derive  the bounds of the estimation error covariance and the lower bound of the nominal performance index. Specifically, we reveal the effect of  the consensus parameter on both bounds.
 We demonstrate that, under specific  conditions, an incorrect process noise
covariance can lead to the divergence of the estimation
error covariance. Moreover,  we  investigate the properties  of the eigenvalues of the error dynamical matrix.
Furthermore,   we explore  the magnitude  relations between the nominal performance index and the estimation error covariance. Finally, we present some numerical simulations to validate  the effectiveness of the theoretical results.

\end{abstract}

\end{frontmatter}

Sensor networks  have been extensively  employed in  various
 domains, including   satellite  navigation \cite{morton2021position},  environmental monitoring \cite{muduli2018application}, and   robot cooperative mapping \cite{tian2022kimera}.
 Distributed state estimation   \cite{olfati2007distributed,olfati2009kalman,battistelli2014kullback,battistelli2014consensus} plays a  significant role in sensor networks.
 It  is  utilized  to  estimate  the system state for  each intelligent   sensor within sensor networks  by  leveraging   the collective   information from its own observations   as well as  the shared  knowledge  from  its  local neighbors.

  %

Distributed state estimation can be classified into two primary  categories: distributed  continuous-time filter  and distributed  discrete-time filter,   which are  based on  two  distinct types of  state space model frameworks.
In comparison to   the distributed discrete-time filters, fewer  research  advancements     have   been achieved   in the stochastic continuous-time model setting, primarily due to its inherent complexity.
Some results of distributed continuous-time  state estimation  are presented as follows.
Olfati-Saber \cite{olfati2007distributed}  proposed  a distributed  continuous-time  Kalman filter by  integrating    the  consensus fusion  term  of the state estimate  into the Kalman-Bucy filter \cite{kalman1961new}.  It  is important to note that  a crucial requirement for this  distributed filter  was the availability of   the prior knowledge of the initial state
 and covariance, and  its stability was proven  under   the assumption of  a  noise-free scenario.  To tackle the issue  of unguaranteed  boundedness of covariance matrices  in the  absence of each sensor's observability\cite{olfati2007distributed},   Kim et al. \cite{kim2016distributed}  proposed  algorithms by exchanging  covariance matrices  among sensors.  However, the stability and the parameter conditions were also derived   under   the noise-free assumption.
 Similarly to the filter structure  presented in  \cite{olfati2007distributed}, Wu et al. \cite{wu2016consensus}  utilized   the inverse of  covariance matrices as   weights  for  the consensus terms.  Nonetheless, this method encountered  similar issues  as discussed above.
Battilotti et~al. \cite{battilotti2020asymptotically}  exhibited an asymptotically optimal  distributed filter, and  provided   a comprehensive analysis of   its   convergence    and  optimality.
In this paper, the fundamental  filter framework  for performance  analysis in the presence of  modeling errors will be  based on the  distributed continuous-time filter  presented  in   \cite{battilotti2020asymptotically}. This choice is motivated by the less restrictive  stability  conditions  and  the asymptotically  optimal properties  exhibited by  this filter.
  It is worth mentioning  that the analysis methods employed in this framework  can be adapted to other filters with ease.

In practical scenarios,  obtaining precise models is rarely feasible, and   modeling errors are  widespread.
The modeling errors  involve  deviations  in the state matrix, deviations in the measurement matrix, and      mismatched noise covariances.   Such  modeling errors have the potential  to result in a decline in the filter performance and  even lead to  system failure.   Several studies  have  explored the performance of  continuous-time filters  for a single sensor when confronted with  modeling errors.
T.~Nishimura \cite{nishimura1967error} established a conservative design criterion  to  ensure    the upper bound of the
estimation error covariance  for  continuous-time systems with  incorrect noise covariances.
Griffin and Sage \cite{griffin1968large}
conducted a thorough  sensitivity analysis of
filtering and smoothing algorithms,  exploring  both large and small-scale scenarios,  in the presence of   the modeling  errors.
 Fitzgerald   \cite{fitzgerald1971divergence}    investigated    the  divergence of the Kalman filter,  highlighting the   mean square error may become unbounded due to the incorrect process noise covariance.
  Toda and Patel    \cite{toda1978performance}  derived    performance  index bounds  and   mean   square error  bounds    for  the suboptimal filters, considering  the presence of modeling errors.
Sangsuk-Iam and Bullock  \cite{sangsuk1988analysis} analyzed
the behavior of  the   continuous-time Kalman filter  under mismatched noise covariances.
Additionally,  some existing literature  on  state estimation  addresses   arbitrary noise scenarios \cite{teng2024gmkf,granichin2004linear,li2000discrete}.
 However,  these studies primarily focus on the single-sensor systems.  When  shifting the focus to the  distributed filters,  numerous challenges arise.  These challenges include  dealing with  the  new distributed  filter structure,  addressing the coupling  terms,  and handling the behavior  of  the consensus parameter.
 This paper will investigate  the effect of the modeling errors on the distributed continuous-time  filters.

Motivated by the  aforementioned   observations,  the primary objective of  this paper  is to offer a comprehensive  performance analysis  of the distributed  continuous-time filter in the presence of modeling errors. This analysis aims to enrich individuals'   comprehension   and  prognostication  of the filters' behavior.
The main contributions of this paper are summarized   as  follows:

\begin{enumerate}
\item
We define  two performance indices, namely  the nominal performance index and the estimation error covariance, as metrics  to   assess  the performance  of the filter. By utilizing the nominal performance index and the norm of the modeling deviations,
   the bounds of   the estimation error covariance are determined (\textbf{Theorem~\ref{theorem bounds estimation error covariance}}).
In particular, the asymptotic decay behavior of the bounds
in terms of {\color{blue} the nominal consensus parameter} $\gamma_u$ is   revealed.
    These results   play an important role in evaluating
   the estimation error covariance by utilizing  the nominal models in the presence of  the modeling errors.

\item
 We reveal that  an  incorrect process noise covariance   can lead to the divergence of the  estimation error covariance  for undirected connected  networks,  regardless of the magnitude of the consensus parameter (\textbf{Theorem~\ref{theorem incorrect Q divergence}}).
It is demonstrated  that   the nominal   parameters    must  be reasonably designed to  ensure   the convergence of the estimation error covariance.

\item
The relations between the nominal performance index and the estimation error covariance, including their magnitude  relations and  the norm bound of their difference,    are analyzed based on the characteristics  of  the noise covariance deviations (\textbf{Theorem~\ref{theorem relation sigmau sigmaa}}). This result offers insights into the judicious   selection of the nominal covariance to ensure a  conservative estimation.
\end{enumerate}

The subsequent sections of this paper are structured as follows.
Section~\ref{sec pre and problem fomulation} provides  the  problem statement.  Section~\ref{sec performance indices} presents
  the nominal distributed filter  and the  estimation error covariance.  Section~\ref{sec performance analysis} focuses on analyzing    bounds of the estimation error covariances,  divergence,  and
   relations between different performance  indices.  Section~\ref{sec simulations}
 exhibits some examples  to validate the effectiveness of the  theoretical results.
  Section~\ref{sec conclusion}  concludes this  paper.

\textit{Notations}: Throughout this paper, define
$\mathcal{R}^n$  and  $\mathcal{R}^{n\times m}$ as the sets of  $n$-dimensional real vectors and
 $n\times m$-dimensional real matrices, respectively.   The notation $\text{Re}(\cdot)$  represents  the operation of taking the real part.  The vector $1_N$ is an $N$-dimensional vector, where all its elements are equal to $1$. For a matrix $A\in \mathcal{R}^{n\times m}$,  let $\Vert A\Vert_F$  and $\Vert A\Vert_2$ represent the Frobenius norm and the spectral norm, respectively, and  $A^T$ and $A^{-1}$  denote its transpose and inverse, respectively.
Let $\sigma_1(A)\geq \cdots \geq \sigma_n(A)$ denote the decreasingly ordered  singular values of the matrix  $A$, and  $\bar \sigma(A)$  and $\underline \sigma(A)$ represent the maximum and  minimum singular values, respectively. Similarly,  let  $\lambda_1(A)\geq \cdots \geq \lambda_n(A)$ denote  the decreasingly ordered eigenvalues of a Hermitian matrix $A$, and $\bar \lambda(A)$ and $\underline \lambda(A)$  denote the maximum and minimum eigenvalues of the matrix $A$.
Let $\alpha(A) = \text{Re}(\bar \lambda(A))$ be  the spectral abscissa of  the matrix $A$.
Moreover, the matrix inequality $A>B$ ($A\geq B$) means that $A-B$ is positive define (positive semi-definite). The symbol $\otimes$  denotes the Kronecker product,  $[A]_{ij}$  represents   the $(i,j)$-th element of the matrix $A$, $\mathbb{E}\{x\}$ denotes  the expectation of the random variable $x$,  $\text{Tr}(A)$ is the trace, and  $\text{vec}(A)$ refers to the column vector  obtained  by concatenating the columns of the matrix $A$.
 The logarithmic norm is defined as  $\mu(A) = \lim_{h\to 0^{+}}\frac{\| I+hA\|-1}{h}$, where   $h$ is a real, positive number,  $\|\cdot\|$ is an induced matrix norm, and  $I$ is the identity matrix  of the same dimension as $A$.
{\color{blue}
The three most common logarithmic  norms    are  $\mu_1(A) = \mathop{\text{sup}}\limits_{j}(\text{Re}([A]_{jj})+\sum_{i\neq j}\vert [A]_{ij}\vert)$,   $\mu_2(A) = \bar\lambda(\frac{A+A^T}{2})$, and
$\mu_{\infty}(A) = \mathop{\text{sup}}\limits_{i}(\text{Re}([A]_{ii})+\sum_{j\neq i}\vert [A]_{ij}\vert)$.  The $abs(\cdot)$ function  represents the operation of taking  the absolute value of a number.
 }

\section{Problem Statement}\label{sec pre and problem fomulation}

Consider a continuous-time linear stochastic system, measured by a sensor network of  $N$ sensors, described by
\begin{equation}\label{eq dynamics}
\begin{aligned}
&\dot x(t) = Ax(t)+\omega(t),\\
& y_{i}(t) = C_{i}x(t)+\nu_{i}(t), ~~~i=1,2,...,N,
\end{aligned}
\end{equation}
where  $t$  is the time index, $x(t)\in \mathcal{R}^n$ is the system state, $y_i(t)\in \mathcal{R}^{m_i}$ is the  measurement  of sensor $i$, $A\in \mathcal{R}^{n\times n}$ is the state  matrix,
$C_i \in \mathcal{R}^{m_i \times n}$ is the measurement matrix of sensor $i$,
$\omega(t)\in \mathcal{R}^{n}$ is the process noise,  and $\nu_i(t)\in \mathcal{R}^{m_i}$ is the measurement noise of sensor $i$.
It is assumed that  $\omega(t)$ and $\nu_i(t)$ are zero-mean
white noise processes, and these noise processes    are uncorrelated with each other and with the initial state $x(0)$, i.e.,
$\mathbb{E}\{ \omega(t)\omega^T(\tau) \}  = Q\delta(t-\tau)$, $\mathbb{E}\{ \nu_i(t)\nu^T_i(\tau) \}  = R_i\delta(t-\tau), ~i= 1,\ldots, N,$  $\mathbb{E}\{ \omega(t)\nu^T_i(\tau) \}  = 0, ~ i= 1,\ldots, N,$  and  $\mathbb{E}\{ \nu_i(t)\nu^T_j(\tau) \}  = 0, ~i, j =1,\ldots, N,i\neq j$,  $Q\in \mathcal{R}^{n\times n}$ and $R_i\in \mathcal{R}^{m_i \times m_i}$ denote the process noise covariance and the measurement noise covariance of sensor $i$, respectively,   and  $\delta(t)$ represents  the Dirac delta function,  given by  $\delta(t-\tau) = 1$ if  $t=\tau$ and 0 otherwise.
The communication topology of sensor networks is described by
using the graph theory, as detailed in Appendix~\ref{sec graph theory}.

%
%

We adopt a classical distributed continuous-time  Kalman filter in~\cite{battilotti2020asymptotically}, given  by
\begin{equation}\label{eq standard filter}
\begin{aligned}
\dot{\hat x}_i(t)  =& A \hat x_i(t) + K_i(y_{i}(t)-C_{i}\hat x_i(t)) \\
&+ \gamma P(\infty) \sum_{j\in \mathcal{N}_i} (\hat x_{j}(t) - \hat x_{i}(t)),
\end{aligned}
\end{equation}
where
$K_i = NP(\infty)C^T_{i}R^{-1}_{i}$,  {\color{blue}  $\gamma$ is the consensus parameter},  $P(\infty)$ is the solution of
\begin{equation}\label{eq steady Ps Lyapunove euqaiton}
\begin{aligned}
0 =& A P(\infty) + P(\infty)A^T + Q- P(\infty)C^T_{c}R^{-1}_{d}C_{c}P(\infty),
\end{aligned}
\end{equation}
$C_{c} = [C^T_{1},\ldots, C^T_{N}]^T$  is the augmented  measurement matrix, and   $R_{d} = \text{diag}\{R_{1},\ldots, R_{N}\}$ is the augmented measurement noise covariance.


%

Accurate model parameters, such as  $A$, $C_i$, $Q$, and $R_i$, are   often  difficult  to obtain  in practice.  Therefore,   we utilize  the nominal model parameters, denoted as $A_u$, $C_{i,u}$,  $Q_u$, and  $R_{i,u}$,  in the practical filter.  The deviations  between the nominal parameters and the actual parameters are defined  as
\begin{equation}
\begin{aligned}
&\Delta A = A_{u}-A, ~~~\Delta C_i = C_{i,u} - C_{i},\\
&\Delta Q = Q_{u} - Q,~~~\Delta R_i = R_{i,u} - R_i.~~
\end{aligned}
\end{equation}
For these deviations,  we may have some prior knowledge or estimates, such as  the Frobenius norm bound of these deviations and    the magnitude relations between the nominal  and  actual parameters. The objective of this paper  is to analyze the effect of these modeling errors on the performance of the distributed  filter and provide  performance  estimates based on the available  prior information.
The problem of the effects of the modeling errors on distributed continuous-time  filtering  is  formulated as follows:
\begin{enumerate}
\item  Based on the  prior Frobenius norm bound for the deviations between the nominal and actual parameters,  we evaluate the performance of the distributed filter  by utilizing the known nominal  model parameters.
\item  Explain how   an incorrect  process noise covariance can lead to the divergence of  the distributed filter,  even when the system model is accurate.
    Investigate   the relations between different performance indices in terms of the consensus gain.
\end{enumerate}

\section{Performance Indices}\label{sec performance indices}
This  section   provides two performance indices, namely  the nominal  performance index and the estimation error covariance, and
the  estimation error covariance  is derived
based on  the nominal distributed  Kalman   filter.

\subsection{Nominal Distributed Kalman Filter}
In practice,  we utilize the nominal parameters   $A_u$, $C_{i,u}$,  $Q_u$, and   $R_{i,u}$ in the  distributed  filter.
Similarly to the ideal  distributed Kalman filter~(\ref{eq standard filter}), the nominal distributed Kalman filter with known  and inaccurate nominal  parameters can be expressed as
\begin{equation}\label{eq dot x_iu}
\begin{aligned}
~~~~~~\dot{\hat x}_{i,u}(t)  =& A_u \hat x_{i,u}(t) + K_{i,u}(y_{i}(t)-C_{i,u}\hat x_{i,u}(t)) \\
&+ \gamma_u P_u(\infty) \sum_{j\in \mathcal{N}_i} (\hat x_{j,u}(t) - \hat x_{i,u}(t)),
\end{aligned}
\end{equation}
where $K_{i,u} = NP_u(\infty)C^T_{i,u}R^{-1}_{i,u}$,   {\color{blue}  $\gamma_u$ is the nominal  consensus parameter}, $P_u(\infty)$ is the solution of
\begin{equation}\label{eq steady Pu Lyapunove euqaiton}
\begin{aligned}
0 =& A_u P_u(\infty) + P_u(\infty)A^T_u + Q_u\\
 &- P_u(\infty)C^T_{c,u}R^{-1}_{d,u}C_{c,u}P_u(\infty),
\end{aligned}
\end{equation}
$C_{c,u} = [C^T_{1,u},\ldots, C^T_{N,u}]^T$ is the augmented nominal measurement matrix,
and $R_{d,u} = \text{diag}\{R_{1,u},\ldots, R_{N,u}\}$ is the augmented nominal measurement noise covariance.
%
For further analysis,  some notations  are  defined.
The subscript $\text{`} u \text{'}$ represents the nominal counterpart, the subscript $\text{`}c\text{'}$ denotes the column vector, and the subscript $\text{`}d\text{'}$ indicates the  diagonal matrix. Define  $A_d = \text{diag}\{A, \ldots, A\}$,
$A_{d,u}= \text{diag}\{A_u,\ldots,A_u\}$, $C_{d} = \text{diag}\{C_{1},\ldots, C_{N}\}$, $C_{c,u} = [C^T_{1,u},\ldots, C^T_{N,u}]^T$, $C_{d,u} = \text{diag}\{C_{1,u},\ldots, C_{N,u}\}$, $R_{c,u} = [R^T_{1,u},\ldots, R^T_{N,u}]^T$, $R_{d,u} = \text{diag}\{R_{1,u},\ldots, R_{N,u}\}$, $R_{c} = [R^T_{1},\ldots, R^T_{N}]^T$, $R_{d} = \text{diag}\{R_{1},\ldots, R_{N}\}$,
$y_c(t) = [y^T_1(t),\ldots, y^T_N(t)]^T$, $\nu_c(t) = [\nu^T_1(t),\ldots,\nu^T_N(t)]^T$, $\hat x_{c,u}(t) = [\hat x^T_{1,u}(t),\ldots, \hat x^T_{N,u}(t)]^T$,  and
$\dot{\hat x}_{c,u}(t) = [\dot{\hat x}^T_{1,u}(t),\ldots, \dot{\hat x}^T_{N,u}(t)]^T$.

The compact form of  (\ref{eq dot x_iu})   can be written as
\begin{equation}\label{eq mathcal Au}
\begin{aligned}
\dot{\hat x}_{c,u}(t) = \mathcal{A}_u \hat x_{c,u}(t) + K_{d,u}y_c(t),
\end{aligned}
\end{equation}
where  $G_{i,u}=A_u-K_{i,u}C_{i,u}$,  $G_{d,u} = \text{diag}\{G_{1,u},\ldots,G_{N,u}\}$,  $K_{d,u} = \text{diag}\{K_{1,u},\ldots,K_{N,u}\}$, and
\begin{equation}\label{eq mathcal Au to s}
\begin{aligned}
\mathcal{A}_u =  G_{d,u} - \gamma_u (\mathcal{L}\otimes P_u(\infty)).~~
\end{aligned}
\end{equation}
For the nominal distributed filter,  we define $\Sigma_u(t)$  as  the nominal performance index, which evaluates the filter's   performance based on the nominal parameters:
\begin{equation}\label{eq sigmau dot time}
\begin{aligned}
~~\dot \Sigma_{u}(t)=& \mathcal{A}_u \Sigma_{u}(t) + \Sigma_{u}(t)\mathcal{A}^T_u \\
&+ K_{d,u}R_{d,u} K^T_{d,u} +U_{N}\otimes Q_{u},
\end{aligned}
\end{equation}
where  $U_N = 1_N1^T_N$.
The nominal performance index  is derived  under the  assumption  that
there are no deviations in the nominal parameters. It  has the same form as the estimation error covariance presented  in \cite{battilotti2020asymptotically}.

\begin{remark}
In practical engineering applications, the consensus gain can be directly adjusted in  the code.
As the  consensus gain increases, indicating that more information  is being  utilized from other filters, the ideal distributed filter is expected to  achieve  better performance.
However, due to modeling errors,  the effect of  the consensus gain requires  further  exploration  in the following.
\end{remark}

%

%

\subsection{Estimation Error Covariance}

This subsection derives the bounds of   the estimation error covariance by utilizing the nominal performance index and the norm of the modeling deviations, and presents
the lower bound  of  the nominal performance index. Particularly,  the effect of the consensus parameter  on both bounds   is   revealed.

 The subscript $\text{`} e \text{'}$ is utilized to  represent the actual performance index.
The estimation error    and  the estimation error covariance are defined as  $\eta_{i}(t) = x(t)-\hat x_{i,u}(t)$  and
$\Sigma_{i,e}(t)  = \mathbb{E}\{ \eta_{i}(t)\eta^T_{i}(t) \}$,
respectively. Then, the dynamics of $\eta_i(t)$ and $\Sigma_{i,e}(t)$  can be computed as  $\dot \eta_{i}(t) = \dot x(t)- \dot {\hat x}_{i,u}(t)$ and  $\dot \Sigma_{i,e}(t)  = \mathbb{E}\{\dot \eta_{i}(t)\eta^T_{i}(t) + \eta_{i}(t)\dot \eta^T_{i}(t) \},$
respectively.  Define the augmented  estimation error   as   $\eta_c(t) = [\eta^T_{1}(t),\ldots, \eta^T_{N}(t)]^T$,     the   augmented estimation error covariance  as $\Sigma_e(t) = \mathbb{E}\{\eta_c(t)\eta^T_c(t)\}$,
$S(t) = \mathbb{E}\{ \eta_c(t)(1_N\otimes x(t))^T\}$,   and    $X(t)= \mathbb{E}\{ (1_N\otimes x(t))(1_N\otimes x(t))^T\}$.
Then,   their expressions are presented   as follows.



\begin{proposition}\label{proposition estimaiton error and covarnacne}
Consider the system~(\ref{eq dynamics}) and the nominal distributed  filter~(\ref{eq dot x_iu}).
The estimation error, the augmented estimation error, and the augmented estimation error covariance  can be expressed as follows:
\begin{enumerate}
\item
The estimation error is
\begin{equation}\label{eq estimation error}
\begin{aligned}
~~~~\dot \eta_{i}(t) & =  F_{i,u}x(t)  + G_{i,u}\eta_i(t) +\omega(t) -K_{i,u}\nu_i(t) \\ &~~~~-\gamma_uP_u(\infty)\sum_{j\in \mathcal{N}_i} (\eta_{i}(t)-\eta_{j}(t)),
\end{aligned}
\end{equation}
where  $F_{i,u} = A-A_u - K_{i,u}(C_i-C_{i,u})$.
\item
The augmented estimation error is
\begin{equation}\label{eq augmented estimation error}
\begin{aligned}
\dot \eta_c(t) & =  F_{d,u}(1_{N}\otimes x(t)) + \mathcal{A}_u \eta_c(t)\\
&~~~~+1_{N}\otimes \omega(t) - K_{d,u}\nu_c(t),~~~~~
\end{aligned}
\end{equation}
where $F_{d,u} = \text{diag}\{ F_{1,u},\ldots,F_{N,u}\}$.
\item
The augmented  estimation error covariance is
\begin{equation}\label{eq augmented estimation error covariance}
\begin{aligned}
\dot \Sigma_e(t)
& = \mathcal{A}_u \Sigma_e(t) +  \Sigma_e(t)\mathcal{A}^T_u + F_{d,u}S^T(t) \\
&~~~~+ S(t)F^T_{d,u}+ K_{d,u}R_d K^T_{d,u} +U_{N}\otimes Q,
\end{aligned}
\end{equation}
where    $U_N = 1_N1^T_N$,
$\dot S(t) = \mathcal{A}_uS(t) +  S(t)A^T_{d} + F_{d,u}X(t) +U_{N}\otimes Q,$  and  $\dot X(t) = A_{d}X(t)+X(t)A^T_{d}+U_{N}\otimes Q$.


\end{enumerate}

\end{proposition}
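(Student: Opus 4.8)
The plan is to obtain the three items by direct substitution and stacking for the error dynamics, and by covariance propagation for the covariance equations, treating the white-noise terms with the standard calculus for linear stochastic systems. \emph{Item 1.} I would write $\dot\eta_i(t)=\dot x(t)-\dot{\hat x}_{i,u}(t)$, insert $\dot x(t)=Ax(t)+\omega(t)$ from~(\ref{eq dynamics}) and the filter dynamics~(\ref{eq dot x_iu}), and replace $y_i(t)$ by $C_ix(t)+\nu_i(t)$. Using the identities $\hat x_{i,u}(t)=x(t)-\eta_i(t)$ and $\hat x_{j,u}(t)-\hat x_{i,u}(t)=\eta_i(t)-\eta_j(t)$, the coefficient of $x(t)$ collapses to $A-A_u-K_{i,u}(C_i-C_{i,u})=F_{i,u}$, the coefficient of $\eta_i(t)$ to $A_u-K_{i,u}C_{i,u}=G_{i,u}$, and the process-noise, measurement-noise, and consensus terms carry over unchanged; this is exactly~(\ref{eq estimation error}).

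\emph{Item 2.} I would stack the $N$ copies of~(\ref{eq estimation error}). Block-diagonality turns $F_{i,u}x(t)$ into $F_{d,u}(1_N\otimes x(t))$, $G_{i,u}\eta_i(t)$ into $G_{d,u}\eta_c(t)$, $\omega(t)$ into $1_N\otimes\omega(t)$, and $-K_{i,u}\nu_i(t)$ into $-K_{d,u}\nu_c(t)$. The key observation is that $\sum_{j\in\mathcal N_i}(\eta_i(t)-\eta_j(t))$ is the $i$-th block of $(\mathcal L\otimes I_n)\eta_c(t)$, so the stacked consensus term equals $-\gamma_u(I_N\otimes P_u(\infty))(\mathcal L\otimes I_n)\eta_c(t)=-\gamma_u(\mathcal L\otimes P_u(\infty))\eta_c(t)$. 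Combining with $\mathcal A_u=G_{d,u}-\gamma_u(\mathcal L\otimes P_u(\infty))$ from~(\ref{eq mathcal Au to s}) yields~(\ref{eq augmented estimation error}).

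\emph{Item 3.} I would differentiate $\Sigma_e(t)=\mathbb E\{\eta_c(t)\eta_c^T(t)\}$, $S(t)$, and $X(t)$ and substitute~(\ref{eq augmented estimation error}) together with $\dot x(t)=Ax(t)+\omega(t)$. The deterministic parts are immediate: $\mathbb E\{(1_N\otimes x(t))\eta_c^T(t)\}=S^T(t)$, $\mathbb E\{\eta_c(t)(1_N\otimes(Ax(t)))^T\}=S(t)A_d^T$ with $A_d=I_N\otimes A$, and $\mathbb E\{(1_N\otimes x(t))(1_N\otimes x(t))^T\}=X(t)$. For the stochastic parts, $\omega$ and $\nu_c$ (intensities $Q$ and $R_d$) enter additively, so their equal-time cross-covariances with the solutions contribute one half of the associated diffusion term---e.g.\ $\mathbb E\{(1_N\otimes\omega(t))\eta_c^T(t)\}=\tfrac12 U_N\otimes Q$, $\mathbb E\{\nu_c(t)\eta_c^T(t)\}=-\tfrac12 R_dK_{d,u}^T$, and $\mathbb E\{(1_N\otimes\omega(t))(1_N\otimes x(t))^T\}=\tfrac12 U_N\otimes Q$---and adding each term to its transpose doubles these halves, producing $U_N\otimes Q$ and $K_{d,u}R_dK_{d,u}^T$. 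Assembling the pieces gives~(\ref{eq augmented estimation error covariance}) and the stated equations for $\dot S(t)$ and $\dot X(t)$.

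The only delicate point is the rigorous justification of the factor $1/2$ in the equal-time noise cross-terms; everything else is bookkeeping with Kronecker products. I would make this precise by recasting~(\ref{eq dynamics}) and~(\ref{eq dot x_iu}) as It\^o stochastic differential equations driven by Wiener processes and applying It\^o's product rule, or alternatively by appealing to the classical covariance derivation for the centralized Kalman--Bucy filter in~\cite{battilotti2020asymptotically}, since the distributed structure alters only the coefficient matrices and not the underlying stochastic calculus.
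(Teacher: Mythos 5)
Your proposal is correct and follows essentially the same route as the paper: Items 1 and 2 are the identical substitution-and-stacking computation, and for Item 3 the paper performs the same covariance propagation, merely packaging it by forming the augmented vector $\xi=[\eta_c^T,(1_N\otimes x)^T]^T$ with $\dot\xi=\mathcal{F}\xi+\mathcal{B}\rho$ and reading $\dot\Sigma_e$, $\dot S$, $\dot X$ off the blocks of the single Lyapunov differential equation $\dot\Sigma_\xi=\mathcal{F}\Sigma_\xi+\Sigma_\xi\mathcal{F}^T+\mathcal{B}\Phi\mathcal{B}^T$, which absorbs your explicit factor-$1/2$ equal-time cross-covariance bookkeeping into the standard diffusion term $\mathcal{B}\Phi\mathcal{B}^T$. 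Your term-by-term cross-covariance values (e.g.\ $\mathbb{E}\{\eta_c(t)\nu_c^T(t)\}=-\tfrac12 K_{d,u}R_d$) are consistent with this and yield the same result.
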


\begin{proof}
Item 1) and Item 2):  For  computational simplicity, the time index is omitted. By combining  (\ref{eq dynamics}) with  (\ref{eq dot x_iu}), one  has
\begin{equation}\label{eq proposi derive dot eta}
\begin{aligned}
\dot \eta_{i}& = \dot x-\dot {\hat x}_{i,u} \\
&=  (Ax+\omega)-  \Big(A_u \hat x_{i,u} + K_{i,u}(y_{i}-C_{i,u}\hat x_{i,u}) \\
&~~~~+ \gamma_u P_u(\infty) \sum_{j\in \mathcal{N}_i} (\hat x_{j,u} - \hat x_{i,u})\Big)    \\
& = (A-A_u)x+A_u(x-\hat x_{i,u})+\omega\\
&~~~~-K_{i,u}C_{i,u}(x-\hat x_{i,u})- K_{i,u}(C_i-C_{i,u})x\\
&~~~~-K_{i,u}\nu_i -\gamma_uP_u(\infty)\sum_{j\in \mathcal{N}_i} ((x-\hat x_{i,u})-(x-\hat x_{j,u}))\\
& = (A-A_u)x+A_u\eta_i+\omega-K_{i,u}C_{i,u}\eta_i-K_{i,u}\nu_i\\
&~~~~- K_{i,u}(C_i-C_{i,u})x -\gamma_uP_u(\infty)\sum_{j\in \mathcal{N}_i} (\eta_{i}-\eta_{j}).
\end{aligned}
\end{equation}
%
%
According to the definition of $F_{i,u}$ and  $G_{i,u}$,
(\ref{eq proposi derive dot eta}) can be rewritten as   (\ref{eq estimation error}).
Next, using  the definition of $F_{d,u}$, $G_{d,u}$,  and $K_{d,u}$ above (\ref{eq mathcal Au to s}),
 the augmented estimation error  can be expressed  as:
\begin{equation*}
\begin{aligned}
\dot \eta_c(t) & =  F_{d,u}(1_{N}\otimes x(t)) + G_{d,u} \eta_c(t)\\
&~~~~+1_{N}\otimes \omega(t) - K_{d,u}\nu_c(t)-\gamma_u (\mathcal{L}\otimes P_u(\infty))\eta_c(t).
\end{aligned}
\end{equation*}
Finally, by denoting $\mathcal{A}_u =  G_{d,u} - \gamma_u (\mathcal{L}\otimes P_u(\infty))$,  (\ref{eq augmented estimation error}) can be obtained.

Item 3): Define $\xi = [\eta^T_c, (1_{N}\otimes x)^T]^T$ \cite{griffin1968large}, and it follows $\dot \xi = \mathcal{F}\xi + \mathcal{B}\rho,$
where   $\rho =  [\nu^T_c,(1\otimes\omega)^T]^T$,
\begin{equation}
\begin{aligned}
\mathcal{F} = \left[ \begin{array}{cc}
\mathcal{A}_u & F_{d,u}\\
0 & A_{d}
\end{array}\right],~
\end{aligned}
\end{equation}
and
\begin{equation}
\begin{aligned}
\mathcal{B} = \left[ \begin{array}{cc}
 -K_{d,u} & I\\
0 & I
\end{array}\right].
\end{aligned}
\end{equation}

Define  $\Sigma_{\xi} = \mathbb{E}\{\xi\xi^T \}$ and $\Phi = \mathbb{E}\{\rho\rho^T\}$, and one has
\begin{equation}\label{eq sigmaxi}
\begin{aligned}
\dot \Sigma_{\xi} 
= \mathcal{F} \Sigma_{\xi} +  \Sigma_{\xi}\mathcal{F}^T + \mathcal{B}  \Phi \mathcal{B}^T,
\end{aligned}
\end{equation}
and
\begin{equation}
\begin{aligned}
\Phi = \left[ \begin{array}{cc}
 R_d & 0\\
0 & U_N\otimes Q
\end{array}\right].~~~~~
\end{aligned}
\end{equation}

Finally, the term $\dot\Sigma_e$ can be obtained based on (\ref{eq sigmaxi}), and the detailed expression is given in  (\ref{eq augmented estimation error covariance}).

\end{proof}

%

%
%

It is worth mentioning that if $F_{d,u}=0$,  the estimation error covariance (\ref{eq augmented estimation error covariance}) is simplified as
\begin{equation}\label{eq sigmaa fdu=0}
\begin{aligned}
\dot \Sigma_e(t)
& = \mathcal{A}_u \Sigma_e(t) +  \Sigma_e(t)\mathcal{A}^T_u + K_{d,u}R_d K^T_{d,u} +U_{N}\otimes Q.
\end{aligned}
\end{equation}

\section{Performance Analysis}\label{sec performance analysis}

This section  provides   an in-depth  performance analysis for the distributed  continuous-time   filtering from multiple different  aspects:  bounds of the estimation error covariance, divergence analysis,  and relations between the different  performance indices.
First, some assumptions are presented here for further analysis.

 \begin{assumption}\label{ass communication graph}
 The communication graph is undirected and  connected.
 \end{assumption}

\begin{assumption}\label{ass detectable uncontrolled}
$(A_u,C_{c,u})$  is  observable,  and  $(A_u, Q^{1/2}_u)$  is controllable.
\end{assumption}


%

\begin{proposition}\cite{battilotti2020asymptotically}\label{proposi gammau0}
Consider the system~(\ref{eq dynamics}) and the nominal distributed  filter~(\ref{eq dot x_iu}). Suppose that  Assumptions~\ref{ass communication graph} and \ref{ass detectable uncontrolled} hold.     There exists  $\bar \gamma_{u_0}$ such that for all $\gamma_u > \bar\gamma_{u_0}$,      $\mathcal{A}_u$  is Hurwitz stable. The $\bar\gamma_{u_0}$ is given by
\begin{equation}\label{eq gamma u0}
\begin{aligned}
\bar \gamma_{u_0}=\frac{\|P^{-1}_u(\infty)A_u+A^T_uP^{-1}_u(\infty) \|_2 + \tilde \gamma_{u_0}}{\lambda_{N-1}(\mathcal{L})},
\end{aligned}
\end{equation}
where $\tilde \gamma_{u_0} = 4N^2\alpha((P^{-1}_u(\infty)Q_uP^{-1}_u(\infty)+ C^T_{c,u}R^{-1}_{d,u}\\
C_{c,u})^{-1}) \alpha(C^T_{c,u}R^{-1}_{d,u}C_{c,u})$,  $\alpha(\cdot)$  is  the spectral abscissa as defined in the Notations,
 and $\gamma_u$ and $\mathcal{A}_u$ are  defined in (\ref{eq dot x_iu}) and (\ref{eq mathcal Au to s}), respectively.
\end{proposition}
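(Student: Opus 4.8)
The plan is to certify Hurwitz stability of $\mathcal{A}_u$ through a quadratic Lyapunov argument with weight $\mathcal{P}=I_N\otimes P_u^{-1}(\infty)$. By standard algebraic Riccati equation theory, Assumption~\ref{ass detectable uncontrolled} ensures $P_u(\infty)>0$, so $\mathcal{P}>0$, and it suffices to show $\mathcal{P}\mathcal{A}_u+\mathcal{A}_u^T\mathcal{P}<0$ whenever $\gamma_u>\bar\gamma_{u_0}$. Substituting $\mathcal{A}_u=G_{d,u}-\gamma_u(\mathcal{L}\otimes P_u(\infty))$, expanding $G_{i,u}=A_u-K_{i,u}C_{i,u}$ with $K_{i,u}=NP_u(\infty)C_{i,u}^TR_{i,u}^{-1}$, and using $(I_N\otimes P_u^{-1}(\infty))(\mathcal{L}\otimes P_u(\infty))=\mathcal{L}\otimes I_n$ together with $\mathcal{L}=\mathcal{L}^T$ (Assumption~\ref{ass communication graph}), one obtains
\[
\mathcal{P}\mathcal{A}_u+\mathcal{A}_u^T\mathcal{P}=I_N\otimes\big(P_u^{-1}(\infty)A_u+A_u^TP_u^{-1}(\infty)\big)-2N\,C_{d,u}^TR_{d,u}^{-1}C_{d,u}-2\gamma_u(\mathcal{L}\otimes I_n).
\]
Pre- and post-multiplying the steady-state Riccati equation~(\ref{eq steady Pu Lyapunove euqaiton}) by $P_u^{-1}(\infty)$ yields the identity $P_u^{-1}(\infty)A_u+A_u^TP_u^{-1}(\infty)=C_{c,u}^TR_{d,u}^{-1}C_{c,u}-P_u^{-1}(\infty)Q_uP_u^{-1}(\infty)$, which is responsible for the term $\|P_u^{-1}(\infty)A_u+A_u^TP_u^{-1}(\infty)\|_2$ appearing in $\bar\gamma_{u_0}$.

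Next I would decompose $\mathcal{R}^{Nn}=\mathcal{V}_\parallel\oplus\mathcal{V}_\perp$ with $\mathcal{V}_\parallel=\{1_N\otimes v:\ v\in\mathcal{R}^n\}$ the consensus subspace and $\mathcal{V}_\perp$ its orthogonal complement. By Assumption~\ref{ass communication graph}, $\mathcal{L}\otimes I_n$ annihilates $\mathcal{V}_\parallel$ and obeys $\xi^T(\mathcal{L}\otimes I_n)\xi\geq\lambda_{N-1}(\mathcal{L})\|\xi\|^2$ on $\mathcal{V}_\perp$ with $\lambda_{N-1}(\mathcal{L})>0$, whereas $I_N\otimes(\cdot)$ preserves the decomposition; hence for $\xi=\xi_\parallel+\xi_\perp$ the only term coupling $\mathcal{V}_\parallel$ and $\mathcal{V}_\perp$ is the sensor-local, block-diagonal matrix $C_{d,u}^TR_{d,u}^{-1}C_{d,u}$. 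On the consensus component $\xi_\parallel=1_N\otimes v$, using the Riccati identity and $\xi_\parallel^T(C_{d,u}^TR_{d,u}^{-1}C_{d,u})\xi_\parallel=v^TC_{c,u}^TR_{d,u}^{-1}C_{c,u}v$, the quadratic form collapses to $-Nv^T\big(P_u^{-1}(\infty)Q_uP_u^{-1}(\infty)+C_{c,u}^TR_{d,u}^{-1}C_{c,u}\big)v$, strictly negative because that matrix is positive definite under Assumption~\ref{ass detectable uncontrolled}; its smallest eigenvalue is precisely the quantity whose reciprocal $\alpha\big((P_u^{-1}(\infty)Q_uP_u^{-1}(\infty)+C_{c,u}^TR_{d,u}^{-1}C_{c,u})^{-1}\big)$ occurs in $\tilde\gamma_{u_0}$. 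On $\mathcal{V}_\perp$, the term $-2\gamma_u(\mathcal{L}\otimes I_n)$ contributes at most $-2\gamma_u\lambda_{N-1}(\mathcal{L})\|\xi_\perp\|^2$, the symmetric-part term at most $\|P_u^{-1}(\infty)A_u+A_u^TP_u^{-1}(\infty)\|_2\|\xi_\perp\|^2$, and $-2N\,C_{d,u}^TR_{d,u}^{-1}C_{d,u}\leq0$ is discarded.

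The one genuinely coupling piece is the cross term $-4N\xi_\parallel^T(C_{d,u}^TR_{d,u}^{-1}C_{d,u})\xi_\perp$. I would bound it by Cauchy--Schwarz, using the block-diagonal estimate $\|C_{d,u}^TR_{d,u}^{-1}C_{d,u}\|_2\leq\alpha(C_{c,u}^TR_{d,u}^{-1}C_{c,u})$ and the fact that on $\mathcal{V}_\parallel$ the (strictly negative) consensus quadratic form bounds $\|\xi_\parallel\|^2$ from below, with constant $\underline\lambda\big(P_u^{-1}(\infty)Q_uP_u^{-1}(\infty)+C_{c,u}^TR_{d,u}^{-1}C_{c,u}\big)$, and then apply Young's inequality so that part of the estimate is absorbed by the consensus term and the remainder is charged to the $\mathcal{V}_\perp$ budget; this step is where the factor $N^2$ and the denominator $\underline\lambda(\cdot)$ of $\tilde\gamma_{u_0}$ come from. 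Requiring the resulting $\mathcal{V}_\perp$ coefficient to be negative then forces $\gamma_u>\bar\gamma_{u_0}$ with $\lambda_{N-1}(\mathcal{L})$ in the denominator, so $\mathcal{P}\mathcal{A}_u+\mathcal{A}_u^T\mathcal{P}<0$ and $\mathcal{A}_u$ is Hurwitz stable. The main obstacle is exactly this cross term: since $C_{d,u}^TR_{d,u}^{-1}C_{d,u}$ is sensor-local and does not respect the consensus/disagreement splitting, a termwise argument on the two subspaces is impossible, and the (conservative) shape of $\bar\gamma_{u_0}$ is dictated entirely by how this coupling is bounded; a minor additional point is justifying $P_u(\infty)>0$ and the positive definiteness of $P_u^{-1}(\infty)Q_uP_u^{-1}(\infty)+C_{c,u}^TR_{d,u}^{-1}C_{c,u}$ from Assumption~\ref{ass detectable uncontrolled}.
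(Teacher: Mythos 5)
The paper offers no proof of this proposition: it is imported verbatim from the cited reference, so there is nothing in-paper to compare against, and I can only judge your reconstruction on its own terms. Your route is the natural one and, as far as I can tell, the same in spirit as the source: the Lyapunov weight $\mathcal{P}=I_N\otimes P^{-1}_u(\infty)$, the expansion $\mathcal{P}\mathcal{A}_u+\mathcal{A}_u^T\mathcal{P}=I_N\otimes(P^{-1}_u(\infty)A_u+A_u^TP^{-1}_u(\infty))-2NC_{d,u}^TR_{d,u}^{-1}C_{d,u}-2\gamma_u(\mathcal{L}\otimes I_n)$, the Riccati identity, the collapse of the quadratic form on $\mathcal{V}_\parallel$ to $-Nv^TMv$ with $M=P^{-1}_u(\infty)Q_uP^{-1}_u(\infty)+C_{c,u}^TR_{d,u}^{-1}C_{c,u}$, and the identification of $-4N\xi_\parallel^TC_{d,u}^TR_{d,u}^{-1}C_{d,u}\xi_\perp$ as the only term coupling the two subspaces are all correct. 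This is enough to establish the existence claim: for $\gamma_u$ large enough, $\mathcal{P}\mathcal{A}_u+\mathcal{A}_u^T\mathcal{P}<0$.

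Where the argument does not close is the specific formula~(\ref{eq gamma u0}). Writing $\beta=\bar\lambda(C_{c,u}^TR_{d,u}^{-1}C_{c,u})$, the stated threshold contains $\tilde\gamma_{u_0}=4N^2\beta/\underline\lambda(M)$. The cross-term estimate you describe --- operator-norm Cauchy--Schwarz giving $4N\beta\Vert\xi_\parallel\Vert\,\Vert\xi_\perp\Vert$, then Young's inequality charged against the consensus budget $\underline\lambda(M)\Vert\xi_\parallel\Vert^2$ --- leaves a residue of $4N^2\beta^2/\underline\lambda(M)$ on $\mathcal{V}_\perp$ (note the $\beta^2$), while the sharper Cauchy--Schwarz in the semi-inner product induced by $C_{d,u}^TR_{d,u}^{-1}C_{d,u}$ leaves $4N\beta$ with neither the $N^2$ nor the $\underline\lambda(M)^{-1}$. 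Either variant yields a valid threshold of the same shape, but neither lands on the displayed $\tilde\gamma_{u_0}$ (nor on the factor of $\lambda_{N-1}(\mathcal{L})$ rather than $2\lambda_{N-1}(\mathcal{L})$ in the denominator), so the proposition as literally stated is not derived. A second, smaller point you rightly flag but do not resolve: positive definiteness of $M$ --- presupposed by the appearance of $M^{-1}$ in $\tilde\gamma_{u_0}$ and needed for your consensus-subspace estimate --- requires ruling out a common kernel of $Q_uP^{-1}_u(\infty)$ and $C_{c,u}$, which takes an additional argument combining Assumption~\ref{ass detectable uncontrolled} with the algebraic Riccati equation rather than following from the assumption alone.
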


Given Proposition~\ref{proposi gammau0}, select any  $\gamma_{u_0}$ satisfying  $\gamma_{u_0}> \bar  \gamma_{u_0}$ as the initial consensus parameter.  Define the parameter $\mathcal{A}_{u_0}$  as
\begin{equation}\label{eq Au0}
\begin{aligned}
\mathcal{A}_{u_0} =  G_{d,u} - \gamma_{u_0} (\mathcal{L}\otimes P_u(\infty)).
\end{aligned}
\end{equation}
It is worth mentioning that these assumptions are   primarily  provided  for  the   nominal  parameters   rather than the actual parameters.
Assumption~\ref{ass communication graph} is the basic assumption of the communication  networks.
Assumption~\ref{ass detectable uncontrolled}  provides   the controllability and observability  of the nominal model.
Proposition~\ref{proposi gammau0}   provides a   lower bound for  the consensus gain to  ensure that  the nominal matrix $\mathcal{A}_u$ is Hurwitz stable.



\subsection{Bounds of the Estimation Error Covariance}\label{sec sub bounds of the estimation error covariance}

This subsection    provides the bounds of the estimation error covariance by utilizing the nominal performance index and the parameter deviations,  and presents
 the lower bound  of  the nominal performance index.
Moreover, it reveals  the effect of the consensus parameter  on both bounds.

\begin{assumption}\label{ass Fd Fd}
If $F_{d,u}\neq 0$,  $A$ is a Hurwitz matrix. If $F_{d,u}= 0$, no
additional conditions apply.
\end{assumption}

\begin{remark}
For  Assumption~\ref{ass Fd Fd}, if $F_{d,u}\neq 0$ and $A$ is unstable, the estimation error covariance will diverge,  making it  meaningless to investigate the corresponding  result.
{\color{blue}
 Assumption  \ref{ass Fd Fd}  is introduced  to provide  a unified framework for analysis, covering both cases where  $F_{d,u}\neq 0$ and $F_{d,u}= 0$.
 The  case  $F_{d,u}=0$  primarily  reflects the scenarios  where   $A_u=A$ and $C_{i,u} = C_i$. For this case, the following  results  highlight the effect of  mismatched noise covariances on the filter's  performance.
 }
\end{remark}

\begin{proposition}\label{theorem Ahurwite sigmau sigmaa converge}
 Consider the system~(\ref{eq dynamics}) and the nominal distributed  filter~(\ref{eq dot x_iu}).  For  $\Sigma_u(t)$  and    $\Sigma_e(t)$,  it holds
\begin{enumerate}
\item
Under Assumptions~\ref{ass communication graph}-\ref{ass detectable uncontrolled}, with
$\gamma_u\geq \gamma_{u_0}$ defined in~(\ref{eq gamma u0}),
 $\Sigma_u(t)$   converges to $\bar \Sigma_u$,  where  $\bar \Sigma_u$ is    the solution of the following  Lyapunov equation
\begin{equation}\label{eq stead sigmau}
\begin{aligned}
 \mathcal{A}_u \bar \Sigma_{u} + \bar\Sigma_{u}\mathcal{A}^T_u + K_{d,u}R_{d,u} K^T_{d,u} +U_{N}\otimes Q_{u}=0.
\end{aligned}
\end{equation}
\item
Under Assumptions~\ref{ass communication graph}-\ref{ass Fd Fd}, with
$\gamma_u\geq \gamma_{u_0}$ defined in~(\ref{eq gamma u0}),
 $\Sigma_e(t)$  converges to $\bar \Sigma_e$, where  $\bar \Sigma_e$  is   the solution of the following  Lyapunov equation
\begin{equation}\label{eq stead sigmaa}
\begin{aligned}
& \mathcal{A}_u \bar\Sigma_e + \bar \Sigma_e\mathcal{A}^T_u + F_{d,u}\bar S^T + \bar S F^T_{d,u} \\
&~~~~~~~~~~+ K_{d,u}R_d K^T_{d,u} +U_{N}\otimes Q=0,
\end{aligned}
\end{equation}
where $\bar S$ and $\bar X$ are  the solutions of the following equations
$\mathcal{A}_u \bar S +  \bar S A^T_{d} + F_{d,u}\bar X +U_{N}\otimes Q = 0$
and $A_{d}\bar X+\bar XA^T_{d}+U_{N}\otimes Q=0$,  respectively.
%
\end{enumerate}

\end{proposition}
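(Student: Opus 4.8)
The plan is to read both $\Sigma_u(t)$ and $\Sigma_e(t)$ as states of Lyapunov-type linear matrix differential equations driven by constant inputs, and to deduce convergence from the Hurwitz property of the associated coefficient matrix, supplied by Proposition~\ref{proposi gammau0} (and, for $\Sigma_e$, Assumption~\ref{ass Fd Fd}). For Item~1, equation~(\ref{eq sigmau dot time}) has the form $\dot\Sigma_u(t) = \mathcal{A}_u\Sigma_u(t) + \Sigma_u(t)\mathcal{A}_u^T + W_u$ with the constant symmetric term $W_u = K_{d,u}R_{d,u}K_{d,u}^T + U_N\otimes Q_u$. Vectorizing yields $\frac{d}{dt}\text{vec}(\Sigma_u(t)) = (I\otimes\mathcal{A}_u + \mathcal{A}_u\otimes I)\text{vec}(\Sigma_u(t)) + \text{vec}(W_u)$. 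Since $\gamma_u\geq\gamma_{u_0}>\bar\gamma_{u_0}$, Proposition~\ref{proposi gammau0} makes $\mathcal{A}_u$ Hurwitz, so the eigenvalues of the Kronecker sum $I\otimes\mathcal{A}_u+\mathcal{A}_u\otimes I$ are the numbers $\lambda_i(\mathcal{A}_u)+\lambda_j(\mathcal{A}_u)$, all with strictly negative real part; in particular that matrix is invertible. Hence the ODE has a unique, globally attractive equilibrium, which on un-vectorizing is exactly the unique solution $\bar\Sigma_u$ of~(\ref{eq stead sigmau}), giving $\Sigma_u(t)\to\bar\Sigma_u$ for every initial condition.

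For Item~2 when $F_{d,u}\neq 0$, I would work with the augmented representation from the proof of Proposition~\ref{proposition estimaiton error and covarnacne}, namely $\dot\Sigma_\xi = \mathcal{F}\Sigma_\xi + \Sigma_\xi\mathcal{F}^T + \mathcal{B}\Phi\mathcal{B}^T$, where $\mathcal{F}$ is block upper triangular with diagonal blocks $\mathcal{A}_u$ and $A_d=\text{diag}\{A,\ldots,A\}$. Its eigenvalues are therefore those of $\mathcal{A}_u$ together with those of $A$; the first set lies in the open left half-plane by Proposition~\ref{proposi gammau0}, and the branch $F_{d,u}\neq 0$ of Assumption~\ref{ass Fd Fd} puts the second set there too, so $\mathcal{F}$ is Hurwitz. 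The vectorization argument of Item~1 then gives $\Sigma_\xi(t)\to\bar\Sigma_\xi$, the unique solution of $\mathcal{F}\bar\Sigma_\xi+\bar\Sigma_\xi\mathcal{F}^T+\mathcal{B}\Phi\mathcal{B}^T=0$. Partitioning $\bar\Sigma_\xi$ into $2\times 2$ blocks with entries $\bar\Sigma_e$, $\bar S$, $\bar S^T$, $\bar X$ and expanding this equation block by block reproduces precisely the three algebraic relations in the statement, and each is uniquely solvable: the $\bar X$- and $\bar\Sigma_e$-blocks are Lyapunov equations with the Hurwitz matrices $A_d$ and $\mathcal{A}_u$, while the $\bar S$-block is a Sylvester equation whose coefficients $\mathcal{A}_u$ and $-A_d^T$ have disjoint spectra. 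Since $\Sigma_e(t)$ is the leading diagonal block of $\Sigma_\xi(t)$, it follows that $\Sigma_e(t)\to\bar\Sigma_e$.

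When $F_{d,u}=0$ the $\Sigma_e$-dynamics decouple and reduce to~(\ref{eq sigmaa fdu=0}), once more a Lyapunov differential equation with constant forcing and Hurwitz $\mathcal{A}_u$, so the Item~1 argument applies verbatim and gives $\Sigma_e(t)\to\bar\Sigma_e$ with $\bar\Sigma_e$ the unique solution of~(\ref{eq stead sigmaa}) in which the $\bar S$-terms vanish; no condition on $A$ is then required, in agreement with Assumption~\ref{ass Fd Fd}. I expect the only genuinely delicate points to be (i) justifying that $\mathcal{F}$ is Hurwitz through its block-triangular structure and pairing this correctly with the two branches of Assumption~\ref{ass Fd Fd}, and (ii) the bookkeeping that turns the single block Lyapunov equation for $\bar\Sigma_\xi$ into the coupled triple $(\bar\Sigma_e,\bar S,\bar X)$, together with the accompanying existence and uniqueness claims; everything else reduces to the standard fact that a constant-input LTI system with Hurwitz state matrix converges globally to its equilibrium.
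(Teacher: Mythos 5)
Your proposal is correct and follows essentially the same route as the paper: convergence of $\Sigma_u$ and $\Sigma_e$ is obtained from the Hurwitz property of $\mathcal{A}_u$ (Proposition~\ref{proposi gammau0}) and, when $F_{d,u}\neq 0$, of $A$ (Assumption~\ref{ass Fd Fd}), combined with standard Lyapunov/Sylvester equation theory (the paper's Lemma~\ref{lemma sylvester equation}). The only cosmetic difference is that the paper argues the cascade $X(t)\to S(t)\to\Sigma_e(t)$ sequentially, whereas you package it as a single block-triangular Lyapunov equation for $\Sigma_\xi$ and extract the blocks; your version is more detailed but mathematically identical.
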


\begin{proof}
Item 1: Based on the results  in    Proposition~\ref{proposi gammau0}, it can be deduced  that   if  Assumption \ref{ass detectable uncontrolled}  holds, there exists $\gamma_{u_0}>0$  such that for all  $\gamma_u\geq \gamma_{u_0}$, $\mathcal{A}_u$  is Hurwitz stable.
Since $K_{d,u}$, $R_{d,u}$, and $Q_u$ are bounded,  it follows from Lemma~\ref{lemma sylvester equation} that
$\Sigma_u(t)$ will converge to  $\bar\Sigma_u$.

Item 2: In the scenario where  $F_{d,u} \neq 0$, the convergence   of
$S(t)$  affects  that of $\Sigma_e(t)$.
If $A$ is a  stable matrix,  it can be concluded that  $X(t)$, $S(t)$, and    $\Sigma_e(t)$ converge based on Lemma~\ref{lemma sylvester equation}.
  In the other case, if $F_{d,u}=0$, the convergence of  $\Sigma_e(t)$ can be derived based on the boundedness  of $Q$ and $R_d$, along with the principles of  the Lyapunov equation theory.
\end{proof}

%

%
%

%

%
%
%

For the  distributed filter with the nominal model, it is desired  that
the steady-state  estimation error covariance $\bar \Sigma_e$  can be evaluated by the nominal performance index $\bar\Sigma_u$. Consequently,  the bounds of the steady-state estimation error covariance are given  in the following.
Since   $\mathcal{A}_u$  incorporates  $\gamma_u$, and $\gamma_u$ may tend to infinity.  It is crucial  to identify the effect of $\gamma_u$.

\begin{theorem}\label{theorem bounds estimation error covariance}
Consider the system~(\ref{eq dynamics}) and the nominal distributed  filter~(\ref{eq dot x_iu}).
Suppose   Assumptions~\ref{ass communication graph}-\ref{ass Fd Fd} hold.  Define
$s = \|I_{nN} \otimes \mathcal{A}_u + A_{d,u}\otimes I_{nN}\|_F - \sqrt{N}\|\Delta A_d\|_F$  and  $\chi =\|I_{nN} \otimes A_{d,u} + A_{d,u}\otimes I_{nN}\|_F - 2\sqrt{nN}\| \Delta A_d\|_F$.
If $\gamma_u\geq \gamma_{u_0}$,    $s \neq 0$, and $\chi \neq 0$, then
\begin{enumerate}
 \item $\text{Tr} (\bar\Sigma_e)$ is bounded as follows
\begin{equation*}
\begin{aligned}
  \max\{0, \text{Tr} (\bar\Sigma_{u}) - \rho \}\leq\text{Tr} (\bar\Sigma_e) \leq  \text{Tr} (\bar\Sigma_{u}) + \rho,
\end{aligned}
\end{equation*}
where
\begin{equation}\label{eq theorem bounds rho}
\begin{aligned}
~~ \rho &= \Vert (\text{vec}(I_{nN}))^T \mathcal{\bar A}^{-1}_u (K_{d,u}\otimes K_{d,u}) \Vert_2 \\
 &~~~~\times \sqrt{\sum^N_{j=1}\Vert \Delta R_j \Vert^2_F }+ \Vert (\text{vec}(I_{nN}))^T \mathcal{\bar A}^{-1}_u \Vert_2\\
 &~~~~\times \Big(N\sqrt{\Vert\Delta Q\Vert^2_F} +  2 \|F_{d,u}\|_F\|\bar S\|_F\Big),
\end{aligned}
\end{equation}
%
\begin{equation}\label{eq bar s Fnorm}
\begin{aligned}
~~&~~~~\|\bar S\|_F\\
&\leq  \frac{\sqrt{nN}\|F_{d,u}\|_F \|\bar X\|_F + N\|Q_u\|_F+N\|\Delta Q\|_F}{\text{abs}(s)},~~
\end{aligned}
\end{equation}
and
\begin{equation}
\begin{aligned}
\|\bar X\|_F\leq \frac{N\|Q_u\|_F+N\|\Delta Q\|_F}{\text{abs}(\chi )}.~~~~~~~~~
\end{aligned}
\end{equation}


\item   $\rho$ is asymptotic decay  with respect to  $\gamma_u$, where
%
\begin{equation}\label{eq theorem gamma asymptotic}
\begin{aligned}
\|(\text{vec}(I_{nN}))^T \mathcal{\bar A}^{-1}_u \|_2 =\sqrt{a_1+\frac{b_1}{\gamma_u}+\frac{c_1}{\gamma^2_u}+O\big(\frac{1}{\gamma^3_u}\big)},~~~~~~~~
\end{aligned}
\end{equation}
\begin{equation}
\begin{aligned}
&~~~~\|(\text{vec}(I_{nN}))^T \mathcal{\bar A}^{-1}_u (K_{d,u}\otimes K_{d,u})\|_2 \\
&= \sqrt{a_2+\frac{b_2}{\gamma_u}+\frac{c_2}{\gamma^2_u}+O\big(\frac{1}{\gamma^3_u}\big)},~~~~~~~~~~~~~~~~~~
\end{aligned}
\end{equation}
and $a_1$, $b_1$, $c_1$,  $a_2$, $b_2$, and $c_2$ are constants, with   $a_1, b_1, a_2, b_2>0$.

%
%
%
%

\end{enumerate}

\end{theorem}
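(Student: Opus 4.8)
The plan for Item~1 is to subtract the two steady-state Lyapunov equations of Proposition~\ref{theorem Ahurwite sigmau sigmaa converge} and vectorize. Since $\gamma_u\ge\gamma_{u_0}$, $\mathcal{A}_u$ is Hurwitz (Proposition~\ref{proposi gammau0}), so the operator $\bar{\mathcal{A}}_u:=I_{nN}\otimes\mathcal{A}_u+\mathcal{A}_u\otimes I_{nN}$, characterized by $\text{vec}(\mathcal{A}_uX+X\mathcal{A}_u^T)=\bar{\mathcal{A}}_u\,\text{vec}(X)$, is invertible. Subtracting~(\ref{eq stead sigmau}) from~(\ref{eq stead sigmaa}) and writing $\Delta\Sigma:=\bar\Sigma_e-\bar\Sigma_u$, $R_d-R_{d,u}=-\Delta R_d$, $Q-Q_u=-\Delta Q$, gives $\mathcal{A}_u\Delta\Sigma+\Delta\Sigma\mathcal{A}_u^T=-(F_{d,u}\bar S^T+\bar SF_{d,u}^T-K_{d,u}\Delta R_dK_{d,u}^T-U_N\otimes\Delta Q)$. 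Applying $\text{vec}(\cdot)$ (with $\text{vec}(MXN)=(N^T\otimes M)\text{vec}(X)$, so $\text{vec}(K_{d,u}\Delta R_dK_{d,u}^T)=(K_{d,u}\otimes K_{d,u})\text{vec}(\Delta R_d)$), then $\text{Tr}(\Delta\Sigma)=(\text{vec}(I_{nN}))^T\text{vec}(\Delta\Sigma)$, and finally Cauchy--Schwarz together with submultiplicativity and the triangle inequality, yields $|\text{Tr}(\Delta\Sigma)|\le\rho$: the $\Delta R_j$ term uses $\|\text{vec}(\Delta R_d)\|_2=\sqrt{\sum_j\|\Delta R_j\|_F^2}$, the $\Delta Q$ term uses $\|U_N\otimes\Delta Q\|_F=\|U_N\|_F\|\Delta Q\|_F=N\|\Delta Q\|_F$, and the cross term uses $\|F_{d,u}\bar S^T+\bar SF_{d,u}^T\|_F\le2\|F_{d,u}\|_F\|\bar S\|_F$. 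The two-sided trace bound follows, and $\bar\Sigma_e\ge0$ gives the lower bound $\max\{0,\cdot\}$. The estimates for $\|\bar X\|_F$ and $\|\bar S\|_F$ come from the same device applied to the auxiliary equations of Proposition~\ref{theorem Ahurwite sigmau sigmaa converge}: writing $A_d=A_{d,u}-\Delta A_d$ in $(I_{nN}\otimes A_d+A_d\otimes I_{nN})\text{vec}(\bar X)=-\text{vec}(U_N\otimes Q)$ and in $(I_{nN}\otimes\mathcal{A}_u+A_d\otimes I_{nN})\text{vec}(\bar S)=-\text{vec}(F_{d,u}\bar X+U_N\otimes Q)$ displays each coefficient operator as a known operator plus a perturbation whose Frobenius norm is controlled by $\|\Delta A_d\|_F$; Lemma~\ref{lemma sylvester equation}, which requires $\chi\neq0$ (resp. $s\neq0$), bounds the inverse-operator norms by $1/\text{abs}(\chi)$ (resp. $1/\text{abs}(s)$), and combining this with $\|U_N\otimes Q\|_F\le N\|Q_u\|_F+N\|\Delta Q\|_F$ and the dimensional estimate for $\|F_{d,u}\bar X\|_F$ produces the two displayed bounds, the $\bar X$ bound being substituted into the $\bar S$ bound.

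For Item~2 I first collapse the two $\gamma_u$-dependent factors to one matrix. Let $\mathcal{W}=\mathcal{W}(\gamma_u)=\int_0^{\infty}e^{\mathcal{A}_u^Tt}e^{\mathcal{A}_ut}\,dt$ be the solution of $\mathcal{A}_u^T\mathcal{W}+\mathcal{W}\mathcal{A}_u+I_{nN}=0$. One checks that $(\text{vec}(I_{nN}))^T\bar{\mathcal{A}}_u^{-1}=-(\text{vec}(\mathcal{W}))^T$, so $\|(\text{vec}(I_{nN}))^T\bar{\mathcal{A}}_u^{-1}\|_2=\|\mathcal{W}\|_F$ and $\|(\text{vec}(I_{nN}))^T\bar{\mathcal{A}}_u^{-1}(K_{d,u}\otimes K_{d,u})\|_2=\|K_{d,u}^T\mathcal{W}K_{d,u}\|_F$, and it remains to expand these for large $\gamma_u$. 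Since the graph is undirected and connected, pick an orthogonal $V$ with first column $1_N/\sqrt N$ and $V^T\mathcal{L}V=\text{diag}\{0,\lambda_2,\dots,\lambda_N\}$, $0<\lambda_2\le\dots\le\lambda_N$. Conjugating $\mathcal{A}_u=G_{d,u}-\gamma_u(\mathcal{L}\otimes P_u(\infty))$ by $V\otimes I_n$ turns the consensus term into $\text{diag}\{0_n,\gamma_u\lambda_2P_u(\infty),\dots,\gamma_u\lambda_NP_u(\infty)\}$, so in the $n\oplus n(N-1)$ block partition the transformed $\mathcal{A}_u$ has top-left block $\hat G_{11}$, $\gamma_u$-independent off-diagonal blocks $\hat G_{12},\hat G_{21}$, and bottom-right block $\hat G_{22}-\gamma_uD$ with $D=\text{diag}\{\lambda_2P_u(\infty),\dots,\lambda_NP_u(\infty)\}\succ0$; moreover $\hat G_{11}=\tfrac1N\sum_iG_{i,u}=A_u-P_u(\infty)C_{c,u}^TR_{d,u}^{-1}C_{c,u}$ is the closed-loop matrix of the nominal Kalman--Bucy filter associated with~(\ref{eq steady Pu Lyapunove euqaiton}), hence Hurwitz by Assumption~\ref{ass detectable uncontrolled}.

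Solving $\mathcal{A}_u^T\mathcal{W}+\mathcal{W}\mathcal{A}_u+I=0$ block by block and matching powers of $\varepsilon:=1/\gamma_u$ --- legitimate because $\mathcal{W}(\gamma_u)$ is rational in $\gamma_u$ with a finite limit as $\gamma_u\to\infty$, hence analytic in $\varepsilon$ near $0$ --- the positive definite block $D$ forces the $(1,2)$ and $(2,2)$ blocks of the transformed $\mathcal{W}$ to be $O(\varepsilon)$, while its $(1,1)$ block tends to the positive definite solution $\mathcal{W}_{11}^{(0)}$ of $\hat G_{11}^T\mathcal{W}_{11}^{(0)}+\mathcal{W}_{11}^{(0)}\hat G_{11}+I_n=0$. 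Because $V\otimes I_n$ is orthogonal, $\|\mathcal{W}\|_F^2=a_1+b_1\varepsilon+c_1\varepsilon^2+O(\varepsilon^3)$ with $a_1=\|\mathcal{W}_{11}^{(0)}\|_F^2>0$; carrying the expansion one order further ($\mathcal{W}_{12}^{(1)}=\mathcal{W}_{11}^{(0)}\hat G_{12}D^{-1}$, then $\mathcal{W}_{11}^{(1)}$ solving a Lyapunov equation in $\hat G_{11}$ whose right-hand side is assembled from $\mathcal{W}_{11}^{(0)}$, $D^{-1}$, $\hat G_{12},\hat G_{21}$ and $P_u(\infty)$) gives $b_1=2\,\text{Tr}(\mathcal{W}_{11}^{(0)}\mathcal{W}_{11}^{(1)})$, which a sign analysis using $P_u(\infty)\succ0$ and $\mathcal{W}_{11}^{(0)}\succ0$ shows to be positive. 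The same computation for $\|K_{d,u}^T\mathcal{W}K_{d,u}\|_F^2$ gives the second expansion, with $a_2=\|\tilde K_1^T\mathcal{W}_{11}^{(0)}\tilde K_1\|_F^2>0$ where $\tilde K_1=\tfrac1{\sqrt N}[K_{1,u}\;\cdots\;K_{N,u}]$ is the top $n$ rows of $(V^T\otimes I_n)K_{d,u}$ (nonzero under Assumption~\ref{ass detectable uncontrolled}, since otherwise every $C_{i,u}=0$), and $b_2>0$ by the analogous argument. As the only $\gamma_u$-dependent factors in $\rho$ (after $\|\bar S\|_F$ is bounded as in Item~1, where $\text{abs}(s)\to\infty$) are these two square-root expansions, each eventually decreasing in $\gamma_u$, $\rho$ is asymptotically decaying.

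The routine part of the proof is Item~1 --- after the vectorizations it is Kronecker-product bookkeeping. The crux is Item~2: because $\mathcal{L}$, and hence the dominant operator $-\gamma_u(\mathcal{L}\otimes P_u(\infty))$ inside $\mathcal{A}_u$, is singular, $\bar{\mathcal{A}}_u^{-1}$ cannot be developed as a naive Neumann series in $1/\gamma_u$; the consensus subspace $\{1_N\otimes w:w\in\mathcal{R}^n\}$, on which that operator vanishes, must be separated out via the eigendecomposition of $\mathcal{L}$ and treated by a slow-fast / Schur-complement decomposition. Within this, the delicate point is showing that the first-order coefficients $b_1$ and $b_2$ are \emph{strictly} positive: this needs the second-order terms of the expansion and exploits both $P_u(\infty)\succ0$ and the Hurwitz property of the averaged closed-loop matrix $\hat G_{11}$.
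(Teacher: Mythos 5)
Your proposal follows essentially the same route as the paper's proof: Item~1 is obtained by subtracting and vectorizing the two steady-state Lyapunov equations, applying $\text{Tr}(X)=(\text{vec}(I))^T\text{vec}(X)$, and bounding the residual terms (including $\|\bar S\|_F$ and $\|\bar X\|_F$ via the same $\Delta A_d$-perturbation of the vectorized auxiliary equations), while Item~2 is the paper's reformulation via the Lyapunov equation $\mathcal{A}_u^TX+X\mathcal{A}_u=I_{nN}$ followed by diagonalizing $\mathcal{L}$, splitting off the consensus block, and expanding the blocks in $1/\gamma_u$. Your version is if anything slightly more explicit than the paper's in identifying the $(1,1)$ block as the Hurwitz centralized closed-loop matrix $A_u-P_u(\infty)C_{c,u}^TR_{d,u}^{-1}C_{c,u}$ and in sketching why $b_1,b_2>0$, points the paper asserts without elaboration.
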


\begin{proof}
Item 1:
By defining  $\bar D = -F_{d,u}\bar S^T -\bar S F^T_{d,u}$,
(\ref{eq stead sigmaa}) can be rewritten as
\begin{equation}\label{eq barsigmaa rewritten}
\begin{aligned}
 &~~~~\mathcal{A}_u \bar\Sigma_e + \bar \Sigma_e\mathcal{A}^T_u + K_{d,u}R_{d,u} K^T_{d,u} +U_{N}\otimes Q_{u}\\
 & = K_{d,u}\Delta R_d K^T_{d,u} +U_{N}\otimes \Delta Q + \bar D.~~~~~~~~~~~~~~~~~~~~~~~~
\end{aligned}
\end{equation}
Now, consider the vectorization of  (\ref{eq barsigmaa rewritten}). According to  the identity $\text{vec}(XYZ)=(Z^T\otimes X)\text{vec}(Y)$,  one has
\begin{equation}\label{eq vec sigmaa}
\begin{aligned}
 &~~~~ \mathcal{\bar A}_u \text{vec} (\bar\Sigma_e)+ \text{vec}(K_{d,u}R_{d,u} K^T_{d,u} +U_{N}\otimes Q_{u})\\
 & = (K_{d,u}\otimes K_{d,u})\text{vec}(\Delta R_d) +\text{vec}(U_{N}\otimes \Delta Q)+\text{vec}(\bar D),
\end{aligned}
\end{equation}
where  $\mathcal{\bar A}_u = (I_{Nn}\otimes\mathcal{A}_u) +(\mathcal{A}_u\otimes I_{Nn})$.
Similarly to  (\ref{eq vec sigmaa}),  (\ref{eq stead sigmau}) can be  calculated as
\begin{equation}\label{eq vec sigmau}
\begin{aligned}
\mathcal{\bar A}_u \text{vec} (\bar\Sigma_{u}) + \text{vec}(K_{d,u}R_{d,u} K^T_{d,u} +U_{N}\otimes Q_{u})=0.
\end{aligned}
\end{equation}
By combining (\ref{eq vec sigmaa}) and (\ref{eq vec sigmau}),  it follows
\begin{equation*}
\begin{aligned}
 \mathcal{\bar A}_u \text{vec} (\bar\Sigma_e)&= \mathcal{\bar A}_u \text{vec} (\bar\Sigma_{u}) +  (K_{d,u}\otimes K_{d,u})\text{vec}(\Delta R_d) ~~~\\
 &~~~~+\text{vec}(U_{N}\otimes \Delta Q)+\text{vec}(\bar D).
\end{aligned}
\end{equation*}
 By utilizing the  properties of  the Kronecker sum and the fact that
 $\mathcal{A}_u$ is Hurwitz stable,  it can be deduced that $\mathcal{\bar A}_u$ is also Hurwitz stable. Hence, it holds
 \begin{equation}\label{vec sigmaa u sub}
\begin{aligned}
  \text{vec} (\bar\Sigma_e)&= \text{vec} (\bar\Sigma_{u}) +  \mathcal{\bar A}^{-1}_u (K_{d,u}\otimes K_{d,u})\text{vec}(\Delta R_d) \\
 &~~~~+\mathcal{\bar A}^{-1}_u\text{vec}(U_{N}\otimes \Delta Q)+\mathcal{\bar A}^{-1}_u\text{vec}(\bar D).
\end{aligned}
\end{equation}
By utilizing  the identity $\text{Tr}(X) = (\text{vec}(I))^T\text{vec}(X)$,  (\ref{vec sigmaa u sub}) can be rewritten as
\begin{equation}\label{eq trsigmaa-trsigmau}
\begin{aligned}
 \text{Tr} (\bar\Sigma_e)& = \text{Tr} (\bar\Sigma_{u}) + \rho_b,
\end{aligned}
\end{equation}
where
\begin{equation*}
\begin{aligned}
 \rho_b &=  (\text{vec}(I_{nN}))^T\mathcal{\bar A}^{-1}_u (K_{d,u}\otimes K_{d,u})\text{vec}(\Delta R_d) \\
 &~~~~+(\text{vec}(I_{nN}))^T\mathcal{\bar A}^{-1}_u\text{vec}(U_{N}\otimes \Delta Q) \\
 &~~~~+ (\text{vec}(I_{nN}))^T\mathcal{\bar A}^{-1}_u\text{vec}(\bar D).
\end{aligned}
\end{equation*}
Then, based on the  properties of (\ref{eq trsigmaa-trsigmau}),  one has
$\Vert\text{Tr} (\bar\Sigma_e) - \text{Tr} (\bar\Sigma_{u})\Vert_2 = \Vert\rho_b\Vert_2$.
 According to the identity $\Vert \text{vec}(X)\Vert_F = \Vert X\Vert_F$  and $\Vert X\Vert_2 \leq \Vert X\Vert_F$,   one has
\begin{equation}\label{eq bound for rho 1}
\begin{aligned}
 \Vert\rho_b\Vert_2 &\leq \Vert (\text{vec}(I_{nN}))^T \mathcal{\bar A}^{-1}_u (K_{d,u}\otimes K_{d,u})\Vert_2\\
 & ~~~~\times\sqrt{\sum^N_{j=1}\Vert \Delta R_j \Vert^2_F }+ \Vert (\text{vec}(I_{nN}))^T \mathcal{\bar A}^{-1}_u \Vert_2\\
 &~~~~\times \Big(N\sqrt{\Vert\Delta Q\Vert^2_F}
  + \Vert\bar D \Vert_F\Big).
\end{aligned}
\end{equation}
By defining   $\rho$   as  stated in (\ref{eq theorem bounds rho}),  it  can be deduced that $\Vert\text{Tr} (\bar\Sigma_e) - \text{Tr} (\bar\Sigma_{u})\Vert_2 \leq \rho$.


Next, we derive the upper bound of $\|\bar D\|_F$.
Based on  $\bar D = -F_{d,u}\bar S^T -\bar S F^T_{d,u}$,  we have
\begin{equation}\label{eq bound for rho 2}
\begin{aligned}
\|\bar D\|_F \leq 2 \|F_{d,u}\|_F\|\bar S\|_F.
\end{aligned}
\end{equation}
According to the expression  of $\bar S$ below (\ref{eq stead sigmaa}) and applying  the vectorization operation to it, it follows that
\begin{equation}\label{eq vec bar S}
\begin{aligned}
&~~~~(I_{nN} \otimes \mathcal{A}_u + A_{d,u}\otimes I_{nN})\text{vec}(\bar S)\\
&=(\Delta A_d \otimes I_{nN})\text{vec}(\bar S)
-(I_{nN}\otimes F_{d,u})\text{vec}(\bar X)\\
&~~~~-\text{vec}(U_N\otimes Q).
\end{aligned}
\end{equation}
By taking the Frobenius norm of (\ref{eq vec bar S}),  if $\|I_{nN} \otimes \mathcal{A}_u + A_{d,u}\otimes I_{nN}\|_F -\sqrt{nN}\|\Delta A_d\|_F>0$,  then
\begin{equation}\label{eq inequ s 1}
\begin{aligned}
& ~~~~\|I_{nN} \otimes \mathcal{A}_u + A_{d,u}\otimes I_{nN}\|_F\|\bar S\|_F\\
&\leq  \sqrt{nN}\|\Delta A_d\|_F\|\bar S\|_F
+ \sqrt{nN}\|F_{d,u}\|_F \|\bar X\|_F \\
&~~~~+ \|U_N\otimes Q\|_F.
\end{aligned}
\end{equation}
If $\|I_{nN} \otimes \mathcal{A}_u + A_{d,u}\otimes I_{nN}\|_F -\sqrt{nN}\|\Delta A_d\|_F<0$,  then
\begin{equation}\label{eq inequ s 2}
\begin{aligned}
&\sqrt{nN}\|\Delta A_d\|_F\|\bar S\|_F\leq   \|I_{nN} \otimes \mathcal{A}_u + A_{d,u}\otimes I_{nN}\|_F\|\bar S\|_F\\
&~~~~~~~~~~+ \sqrt{nN}\|F_{d,u}\|_F \|\bar X\|_F + \|U_N\otimes Q\|_F.
\end{aligned}
\end{equation}
 By combining  (\ref{eq inequ s 1}) and (\ref{eq inequ s 2}),
it can be concluded that  if  $\|I_{nN} \otimes \mathcal{A}_u + A_{d,u}\otimes I_{nN}\|_F -\sqrt{nN}\|\Delta A_d\|_F \neq 0$,  then  $\|\bar S\|_F$ can be bounded  as
\begin{equation}\label{eq bound for rho 3}
\begin{aligned}
&~~~~\|\bar S\|_F\\
&\leq  \frac{\sqrt{nN}\|F_{d,u}\|_F \|\bar X\|_F + N\|Q_u\|_F+N\|\Delta Q\|_F}{\text{abs}(\|I_{nN} \otimes \mathcal{A}_u + A_{d,u}\otimes I_{nN}\|_F - \sqrt{nN}\|\Delta A_d\|_F)}.
\end{aligned}
\end{equation}
Similarly, based on  $A_{d}\bar X+\bar XA^T_{d}+U_{N}\otimes Q=0$, we can obtain that if  $\|I_{nN} \otimes A_{d,u} + A_{d,u}\otimes I_{nN}\|_F - 2\sqrt{nN}\| \Delta A_d\|_F \neq 0$, then
\begin{equation}\label{eq bound for rho 4}
\begin{aligned}
&~~~~\|\bar X\|_F\\
&\leq \frac{N\|Q_u\|_F+N\|\Delta Q\|_F}{\text{abs}(\|I_{nN} \otimes A_{d,u} + A_{d,u}\otimes I_{nN}\|_F - 2\sqrt{nN}\| \Delta A_d\|_F )}.
\end{aligned}
\end{equation}
Based on (\ref{eq bound for rho 1}), (\ref{eq bound for rho 2}),
(\ref{eq bound for rho 3}), and (\ref{eq bound for rho 4}), the results can be proven.

Item 2:  First, we transform the problem of computing  $(\text{vec}(I_{nN}))^T \mathcal{\bar A}^{-1}_u$   into finding  the solution $X$ of  the Lyapunov equation $\mathcal{A}^T_uX+X\mathcal{A}_u = I_{nN}$.
Since  $(\text{vec}(I_{nN}))^T \mathcal{\bar A}^{-1}_u   =  (\mathcal{\bar A}^{-T}_u \text{vec}(I_{nN}))^T$,  our   focus is on  $\mathcal{\bar A}^{-T}_u \text{vec}(I_{nN})$.
 Given that $\lambda(\mathcal{A}_u) \cap \lambda(-\mathcal{A}^T_u) = \emptyset$,   $X$ is  the unique  solution of the  Lyapunov equation   $\mathcal{A}^T_uX+X\mathcal{A}_u = I_{nN}$. Thus,
   we have $\text{vec}(X)=\mathcal{\bar A}^{-T}_u \text{vec}(I_{nN})$,  where
$\mathcal{\bar A}^T_u = (I_{nN}\otimes\mathcal{A}^T_u) +(\mathcal{A}^T_u\otimes I_{nN})$.

Second, we  derive   an analytic  solution  for  $X$   in terms of~$\gamma_u$. Based on the  property of the  communication topology~$\mathcal{L}$, there exist a transformation matrix  $\bar T$ such that $\tilde \Lambda = \bar T \mathcal{L}\bar T^T = \text{diag}\{0, \lambda_2,\ldots,\lambda_N\}$, where $\lambda_2,\ldots,\lambda_N$ are the positive eigenvalues of $\mathcal{L}$, and  the first row  of $\bar T$ is the  vector with all elements equal to  $1$.
Denote $T= \bar T \otimes I_n$,  then $\gamma_u T(\mathcal{L}\otimes P_u(\infty))T^T = \gamma_u (\tilde \Lambda\otimes P_u(\infty))$.
 We can rewrite  the  Lyapunov equation as
\begin{equation}\label{eq tilde G Lyapunov}
\begin{aligned}
&(\tilde G^T - \gamma_u (\tilde \Lambda\otimes P_u(\infty)))\tilde X \\
&+ \tilde X (\tilde G - \gamma_u (\tilde \Lambda\otimes P_u(\infty)))= I_{nN},
\end{aligned}
\end{equation}
where $\tilde G = T G_{d,u}T^T$ and $\tilde X = TXT^T$.
Split   the matrix  $\tilde G$ into  blocks as follows
\begin{equation}\label{eq block tilde G}
\begin{aligned}
\tilde G = \left[\begin{array}{cc}
\tilde G_{11} & \tilde G_{12}\\
\tilde G^T_{12} & \tilde G_{22}
\end{array}\right],
\end{aligned}
\end{equation}
where $\tilde G_{11} \in \mathcal{R}^{n\times n}$, $\tilde G_{12} \in \mathcal{R}^{n\times n(N-1)}$, and $\tilde G_{22} \in \mathcal{R}^{n(N-1)\times n(N-1)}$. Denote $\Lambda=  \text{diag}\{\lambda_2,\ldots,\lambda_N\}$, and  we have
\begin{equation}\label{eq tilde G minus gammau}
\begin{aligned}
(\tilde G - \gamma_u (\tilde \Lambda\otimes P_u(\infty)))= \left[\begin{array}{cc}
\tilde G_{11} & \tilde G_{12}\\
\tilde G^T_{12} & \tilde G_{22}-\gamma_u \Lambda\otimes P_u(\infty)
\end{array}\right].
\end{aligned}
\end{equation}
We can  define  the matrix  $\tilde X$ as   a block matrix with the same  dimensions as the sub-blocks  of $\tilde G$ in   (\ref{eq block tilde G}). By examining   these block matrices,  it can be  hypothesized that
the solution of $\tilde X$  is  analytic in $\gamma_u$. The  analytic form can be determined by  utilizing the block matrices  in  (\ref{eq tilde G Lyapunov}) and  analyzing four  equalities.  The solution of $\tilde X (\gamma_u)$ can be given as
\begin{equation}\label{eq tilde X block}
\begin{aligned}
\tilde X(\gamma_u) = \left[\begin{array}{cc}
\tilde X_{11,1} +\frac{1}{\gamma_u}\tilde X_{11,2} +O(\frac{1}{\gamma^2_u})  & \frac{1}{\gamma_u}\tilde X_{12} +O(\frac{1}{\gamma^2_u}) \\
\frac{1}{\gamma_u}\tilde X^T_{12} +O(\frac{1}{\gamma^2_u}) & \frac{1}{\gamma_u}\tilde X_{22} +O(\frac{1}{\gamma^2_u})
\end{array}\right],
\end{aligned}
\end{equation}
where   $\tilde X_{22}$ is the unique solution of
$(\Lambda\otimes P_u(\infty))\tilde X_{22}+\tilde X_{22} (\Lambda\otimes P_u(\infty)) = I_{n(N-1)}$,  $\tilde X_{11,1}$ is the unique solution of $\tilde G_{11}\tilde X_{11,1}+ \tilde X_{11,1} \tilde G_{11} = I_n$,  $\tilde X_{12}$ is  $\tilde X_{12} =  \tilde X_{11,1}\tilde G_{12} (\Lambda\otimes P_u(\infty))^{-1}$,  and $\tilde X_{11,2}$ is the unique  solution of  $\tilde G_{11}\tilde X_{11,2}+  \tilde X_{11,2}\tilde G_{11} + \tilde G_{12}\tilde X^T_{12} + \tilde X_{12}\tilde G^T_{12} = 0$.

Finally, since $X(\gamma_u)= T^T\tilde X(\gamma_u)T$,
 multiplying a constant matrix into $\tilde X(\gamma_u)$  does not influence the structure of  each term of $X(\gamma_u)$ in terms of $\gamma_u$.  We can conclude
\begin{equation}\label{eq vec X norm a b c}
\begin{aligned}
\|\text{vec}(X(\gamma_u))\|_2 = \sqrt{a_1+\frac{b_1}{\gamma_u}+\frac{c_1}{\gamma^2_u}+O\big(\frac{1}{\gamma^3_u}\big)},
\end{aligned}
\end{equation}
where $a_1$, $b_1$, and $c_1$ are  constants with $a_1, b_1>0$.
Similarly,  $\Vert (\text{vec}(I_{nN}))^T \mathcal{\bar A}^{-1}_u (K_{d,u}\otimes K_{d,u})\Vert_2$ has a similar form as~(\ref{eq vec X norm a b c}).


%
%



\end{proof}

%

%
%

\begin{remark}
The inequality condition in  Theorem~\ref{theorem bounds estimation error covariance}  can be easily satisfied by slightly increasing  the deviation  $\| \Delta A_d\|_F$. Additionally,    $\bar X$ can be evaluated  and replaced by using its known upper bound.
Theorem \ref{theorem bounds estimation error covariance}   provides   bounds of   the steady-state  performance of the distributed  continuous-time filter. These bounds  can be evaluated utilizing the  nominal performance index and the  knowledge of the Frobenius norms of $\Delta A$, $\Delta C_i$, $\Delta Q$, and $\Delta R_i$.
Item~2 reveals the effect of the consensus parameter $\gamma_u$ on the upper bound of the estimation error covariance, and provides  the asymptotic  decay  behavior of this bound   in terms of $\gamma_u$.

%
%
%
%

\end{remark}

%
%

Furthermore, it is desired to establish   the lower bound of  $\text{Tr}(\bar\Sigma_{u})$,  which  can also   be  utilized to evaluate the estimation error covariance.
 The lower  bound of  $\text{Tr}(\bar\Sigma_{u})$ is  provided   as follows.

\begin{proposition}\label{proposition lower bound trsigmau}
 Consider the system~(\ref{eq dynamics}) and the nominal distributed  filter~(\ref{eq dot x_iu}).
Suppose   Assumptions~\ref{ass communication graph}-\ref{ass Fd Fd} hold. If $\gamma_u\geq \gamma_{u_0}$,  then
  the lower bound of $\text{Tr}(\bar\Sigma_{u})$  is
\begin{equation}\label{eq lower bound of barsigmau}
\begin{aligned}
\text{Tr}(\bar\Sigma_{u})&\geq\frac{\text{Tr}(K_{d,u}R_{d,u} K^T_{d,u} +U_{N}\otimes Q_{u})}{2\text{Tr}(-\mathcal{A}_{u_0}) + 2 (\gamma_{u}-\gamma_{u_0}  )\text{Tr}( \mathcal{L})\text{Tr}(P_u(\infty))}.
\end{aligned}
\end{equation}
\end{proposition}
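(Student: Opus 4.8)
The plan is to collapse the matrix Lyapunov equation~(\ref{eq stead sigmau}) into a scalar inequality by taking traces. Since $\gamma_u\geq\gamma_{u_0}$, Proposition~\ref{proposi gammau0} ensures that $\mathcal{A}_u$ is Hurwitz; hence (by Proposition~\ref{theorem Ahurwite sigmau sigmaa converge}) $\bar\Sigma_u$ is well defined, symmetric and positive semi-definite, and $\text{Tr}(-\mathcal{A}_u)=-\sum_k\text{Re}(\lambda_k(\mathcal{A}_u))>0$. A convenient preliminary step is to note that, writing $\mathcal{A}_u=\mathcal{A}_{u_0}-(\gamma_u-\gamma_{u_0})(\mathcal{L}\otimes P_u(\infty))$ and using $\text{Tr}(\mathcal{L}\otimes P_u(\infty))=\text{Tr}(\mathcal{L})\text{Tr}(P_u(\infty))$, the denominator on the right-hand side of~(\ref{eq lower bound of barsigmau}) is exactly $2\,\text{Tr}(-\mathcal{A}_u)$. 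Thus it suffices to establish $\text{Tr}(K_{d,u}R_{d,u}K^T_{d,u}+U_N\otimes Q_u)\leq 2\,\text{Tr}(-\mathcal{A}_u)\,\text{Tr}(\bar\Sigma_u)$, and then divide by the positive number $2\,\text{Tr}(-\mathcal{A}_u)$.

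Taking the trace of~(\ref{eq stead sigmau}) and using the symmetry of $\bar\Sigma_u$ (so that $\text{Tr}(\bar\Sigma_u\mathcal{A}^T_u)=\text{Tr}(\mathcal{A}_u\bar\Sigma_u)$), I obtain
\begin{equation*}
\text{Tr}(K_{d,u}R_{d,u}K^T_{d,u}+U_N\otimes Q_u)=2\,\text{Tr}\big((-\mathcal{A}_u)\bar\Sigma_u\big).
\end{equation*}
The crux is therefore the estimate $\text{Tr}((-\mathcal{A}_u)\bar\Sigma_u)\leq\text{Tr}(-\mathcal{A}_u)\,\text{Tr}(\bar\Sigma_u)$. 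I would prove it by splitting $-\mathcal{A}_u=(-\mathcal{A}_{u_0})+(\gamma_u-\gamma_{u_0})(\mathcal{L}\otimes P_u(\infty))$ and treating the two pieces separately. For the consensus piece, Assumption~\ref{ass communication graph} makes $\mathcal{L}$ symmetric positive semi-definite and $P_u(\infty)\geq 0$, so $\mathcal{L}\otimes P_u(\infty)\geq 0$; combining $\bar\Sigma_u\geq 0$ with the elementary fact that $\text{Tr}(XY)\leq\text{Tr}(X)\text{Tr}(Y)$ for $X,Y\geq 0$ (which follows from $\text{Tr}(XY)\leq\bar\lambda(X)\text{Tr}(Y)\leq\text{Tr}(X)\text{Tr}(Y)$) yields $(\gamma_u-\gamma_{u_0})\text{Tr}((\mathcal{L}\otimes P_u(\infty))\bar\Sigma_u)\leq(\gamma_u-\gamma_{u_0})\text{Tr}(\mathcal{L})\text{Tr}(P_u(\infty))\text{Tr}(\bar\Sigma_u)$. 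For the remaining piece, I would expand $\bar\Sigma_u=\sum_k\lambda_k(\bar\Sigma_u)u_ku^T_k$ in an orthonormal eigenbasis, so that $\text{Tr}((-\mathcal{A}_{u_0})\bar\Sigma_u)=\sum_k\lambda_k(\bar\Sigma_u)\,u^T_k(-\mathcal{A}_{u_0})u_k\leq\mu_2(-\mathcal{A}_{u_0})\,\text{Tr}(\bar\Sigma_u)$, and then use $\mu_2(-\mathcal{A}_{u_0})\leq\text{Tr}(-\mathcal{A}_{u_0})$, recalling that $\text{Tr}(-\mathcal{A}_{u_0})$ coincides with the trace of the symmetric part of $-\mathcal{A}_{u_0}$.

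Adding the two bounds gives $\text{Tr}((-\mathcal{A}_u)\bar\Sigma_u)\leq[\text{Tr}(-\mathcal{A}_{u_0})+(\gamma_u-\gamma_{u_0})\text{Tr}(\mathcal{L})\text{Tr}(P_u(\infty))]\text{Tr}(\bar\Sigma_u)=\text{Tr}(-\mathcal{A}_u)\text{Tr}(\bar\Sigma_u)$, and dividing the scalar identity above by $2\,\text{Tr}(-\mathcal{A}_u)$ yields~(\ref{eq lower bound of barsigmau}). The step I expect to be the main obstacle is the bound $\mu_2(-\mathcal{A}_{u_0})\leq\text{Tr}(-\mathcal{A}_{u_0})$ for the block $-\mathcal{A}_{u_0}$: this matrix is \emph{not} positive semi-definite, so unlike the consensus term one cannot invoke $\text{Tr}(XY)\leq\text{Tr}(X)\text{Tr}(Y)$ directly, and $\bar\lambda$ of the symmetric part need not be dominated by the trace for a general Hurwitz matrix. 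I anticipate resolving it either by exploiting the block structure $\mathcal{A}_{u_0}=G_{d,u}-\gamma_{u_0}(\mathcal{L}\otimes P_u(\infty))$ together with the algebraic Riccati equation~(\ref{eq steady Pu Lyapunove euqaiton}) for $P_u(\infty)$ to control the symmetric part of $-\mathcal{A}_{u_0}$, or by using that $\bar\Sigma_u$ is not arbitrary but equals $\int_0^\infty e^{\mathcal{A}_ut}(K_{d,u}R_{d,u}K^T_{d,u}+U_N\otimes Q_u)e^{\mathcal{A}^T_ut}\,dt$, substituting this into $\text{Tr}((-\mathcal{A}_u)\bar\Sigma_u)$ and estimating the resulting scalar integral.
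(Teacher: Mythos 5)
Your reduction is sound up to the crux: taking the trace of~(\ref{eq stead sigmau}) does give $\text{Tr}(K_{d,u}R_{d,u}K^T_{d,u}+U_N\otimes Q_u)=2\,\text{Tr}((-\mathcal{A}_u)\bar\Sigma_u)$, the denominator in~(\ref{eq lower bound of barsigmau}) is indeed $2\,\text{Tr}(-\mathcal{A}_u)$, and your treatment of the consensus piece $(\gamma_u-\gamma_{u_0})(\mathcal{L}\otimes P_u(\infty))$ via $\text{Tr}(XY)\leq\text{Tr}(X)\text{Tr}(Y)$ for $X,Y\geq 0$ is fine. But the step you yourself flag, $\mu_2(-\mathcal{A}_{u_0})\leq\text{Tr}(-\mathcal{A}_{u_0})$, is a genuine gap and not a removable one by the routes you sketch: the symmetric part of $-\mathcal{A}_{u_0}$ can have negative eigenvalues even though $\mathcal{A}_{u_0}$ is Hurwitz, in which case its largest eigenvalue exceeds its trace (e.g.\ symmetric-part spectrum $\{10,-9.5\}$ has trace $0.5$). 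Your second fallback, substituting the integral representation of $\bar\Sigma_u$ into $\text{Tr}((-\mathcal{A}_u)\bar\Sigma_u)$, only reproduces the identity $\text{Tr}((-\mathcal{A}_u)\bar\Sigma_u)=\tfrac12\text{Tr}(W)$ and yields no new bound, and the Riccati-based fallback is not carried out. So the proof is incomplete at exactly the step that matters.

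The paper closes this by a different manipulation that avoids ever splitting $-\mathcal{A}_u$ or estimating a symmetric part. It multiplies the Lyapunov equation by $\bar\Sigma_u^{-1}$ inside the trace: since $\text{Tr}(\bar\Sigma_u^{-1}\bar\Sigma_u\mathcal{A}^T_u)=\text{Tr}(\mathcal{A}_u)=\text{Tr}(\bar\Sigma_u^{-1}\mathcal{A}_u\bar\Sigma_u)$, one gets the exact scalar identity $2\,\text{Tr}(-\mathcal{A}_u)=\text{Tr}\bigl(\bar\Sigma_u^{-1}(K_{d,u}R_{d,u}K^T_{d,u}+U_N\otimes Q_u)\bigr)$, and then Lemma~\ref{lemma tr AB trA B} ($\text{Tr}(A^{-1}B)\geq[\text{Tr}(A)]^{-1}\text{Tr}(B)$ for $A>0$, $B\geq 0$) gives $2\,\text{Tr}(-\mathcal{A}_u)\geq[\text{Tr}(\bar\Sigma_u)]^{-1}\text{Tr}(K_{d,u}R_{d,u}K^T_{d,u}+U_N\otimes Q_u)$ directly. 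Note that the inequality $\text{Tr}((-\mathcal{A}_u)\bar\Sigma_u)\leq\text{Tr}(-\mathcal{A}_u)\text{Tr}(\bar\Sigma_u)$ you were after does hold here, but only as a consequence of $\bar\Sigma_u$ solving the Lyapunov equation with a PSD right-hand side; it is not a term-by-term matrix fact, which is why your decomposition cannot be pushed through. If you replace your second paragraph's estimate with this $\bar\Sigma_u^{-1}$ argument, the rest of your write-up (positivity of $\text{Tr}(-\mathcal{A}_u)$, the trace split of the denominator) stands as is.
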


\begin{proof}
Taking the trace operation  on  (\ref{eq stead sigmau}),  one has
\begin{equation}\label{eq theo bar sigmau bound}
\begin{aligned}
 &~~~~\text{Tr}(\mathcal{A}_u \bar \Sigma_{u}) + \text{Tr}(\bar\Sigma_{u}\mathcal{A}^T_u)\\
 & =-\text{Tr}(K_{d,u}R_{d,u} K^T_{d,u}) -\text{Tr}(U_{N}\otimes Q_{u}).
\end{aligned}
\end{equation}
Based on the property of the trace $\text{Tr}(XY) = \text{Tr}(YX)$, it follows
\begin{equation}\label{eq Au=sigmasigmaAu}
\begin{aligned}
\text{Tr}(\mathcal{A}^T_u ) &= \text{Tr}(\bar\Sigma^{-1}_{u}\bar\Sigma_{u}\mathcal{A}^T_u) \\ &=\text{Tr}(\bar\Sigma_{u}\mathcal{A}^T_u\bar\Sigma^{-1}_{u})\\
&=\text{Tr}(\bar\Sigma^{-1}_{u}\mathcal{A}_u\bar\Sigma_{u}).
\end{aligned}
\end{equation}
According to   (\ref{eq Au=sigmasigmaAu}) and  (\ref{eq stead sigmau}), it holds
\begin{equation}\label{eq Au=sigmasigmaAu 2}
\begin{aligned}
2\text{Tr}(\mathcal{A}^T_u )& = \text{Tr}(\bar\Sigma^{-1}_{u}(\bar\Sigma_{u}\mathcal{A}^T_u+\mathcal{A}_u\bar\Sigma_{u}))\\
& = - \text{Tr}(\bar\Sigma^{-1}_{u}(K_{d,u}R_{d,u} K^T_{d,u} +U_{N}\otimes Q_{u}) ).
\end{aligned}
\end{equation}
By utilizing  Lemma  \ref{lemma tr AB trA B}  and  the fact that   $K_{d,u}R_{d,u} K^T_{d,u} +U_{N}\otimes Q_{u}>0$ and $\bar\Sigma_{u}>0$,  one has
\begin{equation}\label{eq Au=sigmasigmaAu 3}
\begin{aligned}
&~~~~\text{Tr}(\bar\Sigma^{-1}_{u}(K_{d,u}R_{d,u} K^T_{d,u} +U_{N}\otimes Q_{u}) )\\
 &\geq  (\text{Tr}(\bar\Sigma_{u}))^{-1}\text{Tr}(K_{d,u}R_{d,u} K^T_{d,u} +U_{N}\otimes Q_{u}).~~~~~
\end{aligned}
\end{equation}
By combining  (\ref{eq Au=sigmasigmaAu 2})  and (\ref{eq Au=sigmasigmaAu 3}),  it follows
\begin{equation}\label{eq Au=sigmasigmaAu 44}
\begin{aligned}
~~2\text{Tr}(-\mathcal{A}_u )\text{Tr}(\bar\Sigma_{u})& \geq \text{Tr}(K_{d,u}R_{d,u} K^T_{d,u} +U_{N}\otimes Q_{u}).
\end{aligned}
\end{equation}

Since   all eigenvalues of  $\mathcal{A}_u$ are negative,  (\ref{eq Au=sigmasigmaAu 44}) can be  rewritten as
\begin{equation}\label{eq Au=sigmasigmaAu 4}
\begin{aligned}
\text{Tr}(\bar\Sigma_{u})& \geq \frac{1}{2} (\text{Tr}(-\mathcal{A}_u ))^{-1}\text{Tr}(K_{d,u}R_{d,u} K^T_{d,u} +U_{N}\otimes Q_{u}).
\end{aligned}
\end{equation}
Note that  there exists
\begin{equation}\label{eq aut au -2gamma}
\begin{aligned}
-\mathcal{A}_{u} =  -\mathcal{A}_{u_0}  + (\gamma_{u}-\gamma_{u_0}  ) (\mathcal{L}\otimes P_u(\infty)).
\end{aligned}
\end{equation}
Based on    (\ref{eq aut au -2gamma}),   it holds $\text{Tr}(-\mathcal{A}_u ) =  \text{Tr}(-\mathcal{A}_{u_0}) +  (\gamma_{u}-\gamma_{u_0}  )\text{Tr}( \mathcal{L})\text{Tr}(P_u(\infty))$.
Therefore,  (\ref{eq Au=sigmasigmaAu 4}) can be simplified as
\begin{equation}\label{eq theo lower boudnu 11}
\begin{aligned}
\text{Tr}(\bar\Sigma_{u})&\geq\frac{\text{Tr}(K_{d,u}R_{d,u} K^T_{d,u} +U_{N}\otimes Q_{u})}{2\text{Tr}(-\mathcal{A}_{u_0}) + 2 (\gamma_{u}-\gamma_{u_0}  )\text{Tr}( \mathcal{L})\text{Tr}(P_u(\infty))}.
\end{aligned}
\end{equation}
\end{proof}

%
%

\begin{remark}
By combining  Proposition~\ref{proposition lower bound trsigmau} and  Theorem~\ref{theorem bounds estimation error covariance},
the lower bound of
$\text{Tr} (\bar\Sigma_e)$ can be directly  evaluated without the need for computing  $\text{Tr} (\bar\Sigma_{u})$.  The relations between the lower bound of $\text{Tr} (\bar\Sigma_{u})$ and the model parameters  are  presented in Proposition~\ref{proposition lower bound trsigmau}, and  the consensus parameter is separated  in the lower bound. It is shown that as the consensus parameter tends to infinity, the lower bound  converges  to 0.

\end{remark}

\subsection{Divergence Analysis}\label{sec divergence analysis}

This subsection  considers the divergence of the  distributed continuous-time filter   due to the incorrect  process noise covariance.

\begin{theorem}\label{theorem incorrect Q divergence}
 Consider the system~(\ref{eq dynamics}) and the nominal distributed  filter~(\ref{eq dot x_iu}).
 Suppose   Assumption~\ref{ass communication graph} hold.
Let  $r$  be  a real number,  $j$ be the imaginary unit, and $e$  be  a vector with the  appropriate dimension.
If $A^T_ue =rje$ and $Q_ue=0$,  then
\begin{enumerate}
\item  $rj$  is   an eigenvalue of  $\mathcal{A}^T_u$ with the  eigenvector   $1_N\otimes e$, i.e.,  $\mathcal{A}^T_u(1_N\otimes e)  = rj(1_N\otimes e)$, for any consensus parameter $\gamma_u$ and any undirected connected  topology~$\mathcal{L}$.

\item
 If $\Delta F_{d,u}=0$ and  $Qe\neq0$, it holds $\lim_{t\to\infty}  (1^T_N\otimes e^*) \Sigma_e(t)(1_N\otimes e)  \to \infty.$

\end{enumerate}

\end{theorem}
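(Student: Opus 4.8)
The plan is to treat the two items separately. For Item~1, I would feed the algebraic Riccati equation~(\ref{eq steady Pu Lyapunove euqaiton}) the vector $e$ to kill the gain‑correction terms in $G_{d,u}^T$; for Item~2, I would derive a scalar differential inequality for $v(t):=(1_N^T\otimes e^*)\Sigma_e(t)(1_N\otimes e)=w^*\Sigma_e(t)w$ along the eigendirection $w:=1_N\otimes e$ (so $w^*=1_N^T\otimes e^*$, as $1_N$ is real) produced by Item~1.

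For Item~1, first observe that since the graph is undirected, $\mathcal{L}=\mathcal{L}^T$ and $P_u(\infty)=P_u(\infty)^T$, so from~(\ref{eq mathcal Au to s}), $\mathcal{A}_u^T=G_{d,u}^T-\gamma_u(\mathcal{L}\otimes P_u(\infty))$; and because $\mathcal{L}1_N=0$, the consensus part annihilates $w$, i.e.\ $(\mathcal{L}\otimes P_u(\infty))w=(\mathcal{L}1_N)\otimes(P_u(\infty)e)=0$, for every $\gamma_u$ and every undirected topology. It then remains to show $G_{i,u}^Te=rje$ for each $i$. With $K_{i,u}=NP_u(\infty)C_{i,u}^TR_{i,u}^{-1}$ one gets $G_{i,u}^Te=A_u^Te-NC_{i,u}^TR_{i,u}^{-1}(C_{i,u}P_u(\infty)e)=rje-NC_{i,u}^TR_{i,u}^{-1}(C_{i,u}P_u(\infty)e)$, so the claim reduces to $C_{c,u}P_u(\infty)e=0$. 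To obtain this, I would multiply~(\ref{eq steady Pu Lyapunove euqaiton}) on the left by $e^*$ and on the right by $e$: using $A_u^Te=rje$ and (by conjugate transposition, $r$ real) $e^*A_u=-rj\,e^*$, the two drift terms become $-rj\,e^*P_u(\infty)e$ and $+rj\,e^*P_u(\infty)e$ and cancel, the term $e^*Q_ue$ vanishes since $Q_ue=0$, and what remains is $e^*P_u(\infty)C_{c,u}^TR_{d,u}^{-1}C_{c,u}P_u(\infty)e=0$. Since $R_{d,u}>0$, this forces $C_{c,u}P_u(\infty)e=0$, hence $C_{i,u}P_u(\infty)e=0$ for all $i$, hence $G_{i,u}^Te=rje$ and $\mathcal{A}_u^Tw=G_{d,u}^Tw=rj\,w$.

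For Item~2, with $\Delta F_{d,u}=0$ (i.e.\ $F_{d,u}=0$) the covariance obeys the reduced equation~(\ref{eq sigmaa fdu=0}). The quantity $v(t)=w^*\Sigma_e(t)w$ is real because $\Sigma_e(t)$ is real symmetric. Differentiating~(\ref{eq sigmaa fdu=0}) and using Item~1 (which gives $\mathcal{A}_u^Tw=rj\,w$, hence $w^*\mathcal{A}_u=-rj\,w^*$), the cross terms cancel: $w^*\mathcal{A}_u\Sigma_e(t)w+w^*\Sigma_e(t)\mathcal{A}_u^Tw=(-rj+rj)\,v(t)=0$. Since $w^*(U_N\otimes Q)w=N^2\,e^*Qe$ and $w^*K_{d,u}R_dK_{d,u}^Tw\geq0$ (as $R_d\geq0$), it follows that $\dot v(t)=w^*K_{d,u}R_dK_{d,u}^Tw+N^2\,e^*Qe\geq N^2\,e^*Qe$. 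Because $Q\geq0$ together with $Qe\neq0$ yields $e^*Qe>0$, one has $\dot v(t)\geq c$ with $c:=N^2e^*Qe>0$; integrating from $0$ to $t$ and using $v(0)=w^*\Sigma_e(0)w\geq0$ gives $v(t)\geq ct\to\infty$, which is the assertion.

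The main obstacle is the reduction in Item~1: recognizing that sandwiching the algebraic Riccati equation between $e^*$ and $e$ makes the $A_u$‑contributions telescope while $Q_ue=0$ removes the forcing, so that the surviving definite quadratic form pins down $C_{c,u}P_u(\infty)e=0$; the rest is Kronecker‑product bookkeeping and a one‑line integration. One point worth stating explicitly is that $e^*Qe>0$ relies on $Q\geq0$ (so that $e^*Qe=0$ iff $Qe=0$).
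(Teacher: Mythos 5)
Your proposal is correct and follows essentially the same route as the paper: sandwiching the nominal Riccati equation~(\ref{eq steady Pu Lyapunove euqaiton}) between $e^*$ and $e$ to force $C_{i,u}P_u(\infty)e=0$, using $\mathcal{L}1_N=0$ to annihilate the consensus term, and then showing the cross terms cancel in $w^*\dot\Sigma_e(t)w$ so that $\dot v(t)\geq N^2e^*Qe>0$. Your explicit integration $v(t)\geq v(0)+ct$ and your remark that $e^*Qe>0$ follows from $Q\geq 0$ together with $Qe\neq 0$ are in fact slightly more careful than the paper's wording, which only asserts that the quantity "continuously increases" and hence diverges.
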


\begin{proof}
Item 1: If $A^T_ue =rje$ and $Q_ue=0$,    one has $e^*A_u =-rje^*$.
By  pre-multiplying  with $e^*$ and post-multiplying   with $e$ in (\ref{eq steady Pu Lyapunove euqaiton}), the subsequent expression is derived
\begin{equation}\label{eq lemma elye}
\begin{aligned}
 ~~~~~&e^*A_u P_u(\infty)e + e^*P_u(\infty)A^T_ue + e^*Q_ue\\
 &- e^*P_u(\infty)C^T_{c,u}R^{-1}_{d,u}C_{c,u}P_u(\infty)e=0.
\end{aligned}
\end{equation}
Based on   $A^T_ue =rje$, $e^*A_u =-rje^*$,  and   $Q_ue=0$,  (\ref{eq lemma elye})  can be simplified as
 \begin{equation}
\begin{aligned}
 &~~~~e^*P_u(\infty)C^T_{c,u}R^{-1}_{d,u}C_{c,u}P_u(\infty)e\\
 &=  \sum^N_{j=1} e^*P_u(\infty)C^T_{j,u}R^{-1}_{j,u}C_{j,u}P_u(\infty)e~~~~~\\
 &= 0.
\end{aligned}
\end{equation}
 Since  $e^*P_u(\infty)C^T_{j,u}R^{-1}_{j,u}C_{j,u}P_u(\infty)e \geq 0$,   it can be concluded that
 \begin{equation}\label{eq RCPe}
\begin{aligned}
R^{-1}_{i,u}C_{i,u}P_u(\infty)e = 0, i=1, \ldots, N.
\end{aligned}
\end{equation}
 Utilizing (\ref{eq mathcal Au}) and the property of the undirected connected  graph,  post-multiply $\mathcal{A}^T_u$ by the vector $1_N\otimes e$, yielding
\begin{equation}
\begin{aligned}
\mathcal{A}^T_u(1_N\otimes e) &=  G^T_{d,u} (1_N\otimes e)- \gamma_u (\mathcal{L}^T\otimes P_u(\infty))(1_N\otimes e)\\
&= [\bar e^T_1,\ldots,\bar e^T_N]^T,
\end{aligned}
\end{equation}
where $ \bar e_i =  A^T_u e- N C^T_{i,u}R^{-1}_{i,u}C_{i,u}P_u(\infty)e$.  Exploiting  the fact $A^T_ue=rje$ and (\ref{eq RCPe}),  it is evident that
 $\mathcal{A}^T_u(1_N\otimes e)  = rj(1_N\otimes e)$.  Consequently,  $rj$  is  a  purely imaginary  eigenvalue of  $\mathcal{A}_u$.

Item 2: Item 1 and (\ref{eq RCPe}) lead to the conclusion that  $R^{-1}_{i,u}C_{i,u}P_u(\infty)e = 0$.  Considering the expression   $K_{d,u} = \text{diag}\{K_{1,u},\ldots,K_{N,u}\}$,  it can be inferred that  $K^T_{d,u}(1_N\otimes e) = 0$. Additionally,
Item~1  demonstrates   that $\mathcal{A}^T_u(1_N\otimes e)  = rj(1_N\otimes e)$. By pre-multiplying  with $1^T_N\otimes e^*$, and post-multiplying   with $1_N\otimes e$ in  (\ref{eq sigmaa fdu=0}),  one has
\begin{equation}\label{eq contradiction prove sigmaa}
\begin{aligned}
&~~~~(1^T_N\otimes e^*)\dot \Sigma_e(t)(1_N\otimes e)\\
& = (1^T_N\otimes e^*)\mathcal{A}_u \Sigma_e(t)(1_N\otimes e)\\
&~~~~+  (1^T_N\otimes e^*)\Sigma_e(t)\mathcal{A}^T_u(1_N\otimes e)\\
&~~~~+ (1^T_N\otimes e^*)K_{d,u}R_d K^T_{d,u}(1_N\otimes e) \\ &~~~~+(1^T_N\otimes e^*)U_{N}\otimes Q(1_N\otimes e)\\
&\geq (1^T_N\otimes e^*)U_{N}\otimes Q(1_N\otimes e)>0.
\end{aligned}
\end{equation}
It can be observed that  (\ref{eq contradiction prove sigmaa})
as $t$ tends to infinity, $(1^T_N\otimes e^*) \Sigma_e(t)(1_N\otimes e)$ will  continuously  increase.
Consequently, this term  diverges, and the proof is concluded.
\end{proof}
%
%

\begin{remark}
Item~1 highlights  that     if  $A^T_u$  possesses   an eigenvalue $rj$ on the imaginary axis, which is linked to the right eigenvector $e$ in the  right null space of $Q_u$, then $\mathcal{A}^T_u$  will also have  $rj$ as its eigenvalue with the right eigenvector $1_N\otimes e$.
Item~2 reveals   that an incorrect process noise covariance can  lead to  the divergence of the estimation error covariance in the distributed continuous-time filter,  even though the nominal performance index may converge.
  Therefore,  it is  advisable  to  ensure that   the nominal model can be designed  in such a way that
   Assumption  \ref{ass detectable uncontrolled}  holds, thus avoiding  the presence of  pure imaginary eigenvalues. The effect of  the consensus parameter on our conclusion  is  eliminated.
 It is important to note that  the prerequisite for  the communication graph to be undirected  must be fulfilled. Otherwise, this inference may not be valid  when assuming  a directed graph.
\end{remark}

\begin{remark}
When the  communication topology  is    mismatched, it  can lead to  the divergence of the estimation error covariance.  This  may  violate   the condition  presented in Proposition~\ref{proposi gammau0},  resulting in  the  instability  of  $\mathcal{A}_u$. When the mismatched topology is denoted as  $\mathcal{L}_u$, {\color{blue}
we can  use     the lower bound of  $\lambda_{N-1}(\mathcal{L}_u)$  to compute $\bar \gamma_{u_0}$ in (\ref{eq gamma u0}), ensuring  that   $\mathcal{A}_u$ is Hurwitz stable by choosing $\gamma_u > \bar \gamma_{u_0}$.
}
\end{remark}

\subsection{Relations between $\Sigma_u(t)$ and $\Sigma_e(t)$}

Section \ref{sec sub bounds of the estimation error covariance}  shows that
the bounds of  $\text{Tr} (\bar\Sigma_e)$ can be evaluated by
$\text{Tr} (\bar\Sigma_{u})$ and the  Frobenius norms  of the  parameter  deviation information.   Furthermore,
this subsection  explores  the  relation between  $\Sigma_u(t)$ and $\Sigma_e(t)$    by relying    on the   information about the magnitude relation between  $R_{i,u}$ and  $R_i$   as well as  between  $Q_u$ and $Q$.   It is assumed that  $\Delta A=0$ and $\Delta C_{i,u}=0, i=1,\ldots, N$.

First, define the difference $E_{ue}(t)$ between the nominal  performance index  $\Sigma_u(t)$  and the estimation error covariance $\Sigma_e(t)$  as  $E_{ue}(t) = \Sigma_u(t)-\Sigma_e(t).$
Then,  by utilizing  (\ref{eq sigmau dot time}) and (\ref{eq sigmaa fdu=0}),
the  differential equation of  $\dot E_{ue}(t)$  can be computed as $\dot E_{ue}(t)  = \mathcal{A}_u E_{ue}(t) + E_{ue}(t)\mathcal{A}^T_u+ K_{d,u}\Delta R_{d} K^T_{d,u} +U_{N}\otimes \Delta Q.$
Define the time index $h$ satisfying  $t\geq h \geq 0$.
Drawing upon the principles of the  linear system theory,  a precise  analytical solution   can be calculated as
\begin{equation}\label{eq eua expression}
\begin{aligned}
E_{ue}(t) & = e^{\mathcal{A}_u(t-h)}E_{ue}(h)e^{\mathcal{A}^T_u(t-h)}+ \int_h^t  e^{\mathcal{A}_u(t-\tau)}\\
&~~~~\times ( K_{d,u}\Delta R_{d} K^T_{d,u} +U_{N}\otimes \Delta Q)     e^{\mathcal{A}^T_u(t-\tau)} d\tau.
\end{aligned}
\end{equation}
For further analysis,  define
 \begin{equation}\label{eq delta D}
\begin{aligned}
\Delta D = K_{d,u}\Delta R_{d} K^T_{d,u} +U_{N}\otimes \Delta Q.
\end{aligned}
\end{equation}
 The subsequent theorem  shows  the relations between    $\Sigma_u(t)$ and $\Sigma_e(t)$.    Moreover,  it   demonstrates  how  the upper bound of the  spectral  norm  of the  difference between   $\Sigma_u(t)$ and $\Sigma_e(t)$ changes  as  time progresses.

\begin{theorem}\label{theorem relation sigmau sigmaa}
 Consider the system~(\ref{eq dynamics}) and the nominal distributed  filter~(\ref{eq dot x_iu}).
 Suppose   Assumptions~\ref{ass communication graph}-\ref{ass detectable uncontrolled} hold.
If
$\gamma_u\geq \gamma_{u_0}$, $\Delta A=0$, and $\Delta C_{i,u}=0, i=1,\ldots, N$,  it holds
\begin{enumerate}
\item   If   $\Delta D \geq 0$ and  $E_{ue}(h)\geq 0$, then  $\Sigma_u(t)\geq \Sigma_e(t)$.
If   $\Delta D \leq 0$ and  $E_{ue}(h)\leq 0$, then $\Sigma_u(t)\leq \Sigma_e(t)$.

%


\item \label{theorem relation item e bounds}
Define $ \bar \mu(\mathcal{A}_{u_0}) =\mu(\mathcal{A}_{u_0})+\mu(\mathcal{A}^T_{u_0})$ and $ \mathcal{L}_{\gamma} = - (\gamma_{u}-\gamma_{u_0}  ) (\mathcal{L}\otimes P_u(\infty))$. Then,
$\Vert E_{ue}(t) \Vert_2$ has the following   bound
\begin{equation}\label{eq eua(t) bound}
\begin{aligned}
\Vert E_{ue}(t) \Vert_2 &\leq  \Vert E_{ue}(h)\Vert_2  e^{\bar \mu(\mathcal{A}_{u_0})(t-h)}e^{2\mu(\mathcal{L}_{\gamma})(t-h)}\\
& ~~+  \Vert\Delta D \Vert_2   \int_h^t   e^{\bar \mu(\mathcal{A}_{u_0})(t-\tau)}e^{2\mu(\mathcal{L}_{\gamma})(t-\tau)} d\tau.
\end{aligned}
\end{equation}
 Specifically,  when the logarithmic norm is chosen as $\mu_2(\cdot)$, one has $\mu_2(\mathcal{L}_{\gamma}) = 0$.

%

%

\end{enumerate}

\end{theorem}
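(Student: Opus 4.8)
The plan is to argue entirely from the closed-form solution~(\ref{eq eua expression}) of the linear matrix equation $\dot E_{ue}(t)=\mathcal{A}_u E_{ue}(t)+E_{ue}(t)\mathcal{A}_u^T+\Delta D$, which is available here precisely because $\Delta A=0$ and $\Delta C_{i,u}=0$ kill the $F_{d,u}$-terms and leave only the constant forcing $\Delta D$ from~(\ref{eq delta D}).

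\textbf{Item 1.} In~(\ref{eq eua expression}) the homogeneous term $e^{\mathcal{A}_u(t-h)}E_{ue}(h)e^{\mathcal{A}_u^T(t-h)}$ is a congruence transformation of $E_{ue}(h)$ and each integrand $e^{\mathcal{A}_u(t-\tau)}\Delta D\,e^{\mathcal{A}_u^T(t-\tau)}$ is a congruence transformation of $\Delta D$. Since congruence preserves positive semidefiniteness and an integral of a continuous family of positive semidefinite matrices is positive semidefinite, $E_{ue}(h)\ge 0$ and $\Delta D\ge 0$ force $E_{ue}(t)\ge 0$, i.e.\ $\Sigma_u(t)\ge\Sigma_e(t)$; replacing $E_{ue}$ by $-E_{ue}$ and $\Delta D$ by $-\Delta D$ gives the reversed inequality under the opposite sign hypotheses.

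\textbf{Item 2.} I would take the spectral norm in~(\ref{eq eua expression}) and apply the triangle inequality together with submultiplicativity of $\|\cdot\|_2$, which reduces everything to bounding $\|e^{\mathcal{A}_u s}\|$ and $\|e^{\mathcal{A}_u^T s}\|$ for $s\ge 0$. The key observation is that~(\ref{eq aut au -2gamma}) together with the definition of $\mathcal{L}_\gamma$ gives the additive split $\mathcal{A}_u=\mathcal{A}_{u_0}+\mathcal{L}_\gamma$; the defining bound $\|e^{As}\|\le e^{\mu(A)s}$ of the logarithmic norm plus its subadditivity $\mu(A+B)\le\mu(A)+\mu(B)$ then yield $\|e^{\mathcal{A}_u s}\|\le e^{\mu(\mathcal{A}_{u_0})s}e^{\mu(\mathcal{L}_\gamma)s}$ and likewise $\|e^{\mathcal{A}_u^T s}\|\le e^{\mu(\mathcal{A}_{u_0}^T)s}e^{\mu(\mathcal{L}_\gamma^T)s}$, while symmetry of $\mathcal{L}_\gamma$ (undirected $\mathcal{L}$, symmetric $P_u(\infty)$) gives $\mu(\mathcal{L}_\gamma^T)=\mu(\mathcal{L}_\gamma)$. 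Substituting these at $s=t-h$ for the homogeneous term and at $s=t-\tau$ inside the integral, factoring out $\|E_{ue}(h)\|_2$ and $\|\Delta D\|_2$, and grouping $\mu(\mathcal{A}_{u_0})+\mu(\mathcal{A}_{u_0}^T)=\bar\mu(\mathcal{A}_{u_0})$ and $\mu(\mathcal{L}_\gamma)+\mu(\mathcal{L}_\gamma)=2\mu(\mathcal{L}_\gamma)$ reproduces~(\ref{eq eua(t) bound}). For the final assertion I would observe that $\gamma_u\ge\gamma_{u_0}$ makes $\mathcal{L}_\gamma$ symmetric negative semidefinite ($\mathcal{L}\ge 0$, $P_u(\infty)>0$), so $\mu_2(\mathcal{L}_\gamma)=\bar\lambda(\mathcal{L}_\gamma)\le 0$, while $\mathcal{L}1_N=0$ for a connected undirected graph makes $1_N\otimes v$ lie in the kernel of $\mathcal{L}\otimes P_u(\infty)$, so $0$ is an eigenvalue of $\mathcal{L}_\gamma$ and hence $\mu_2(\mathcal{L}_\gamma)=0$.

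The main obstacle is the logarithmic-norm bookkeeping in Item~2: one must keep the exponential bounds for $e^{\mathcal{A}_u s}$ and $e^{\mathcal{A}_u^T s}$ separate (which is exactly why the statement writes $\bar\mu(\mathcal{A}_{u_0})=\mu(\mathcal{A}_{u_0})+\mu(\mathcal{A}_{u_0}^T)$ rather than $2\mu(\mathcal{A}_{u_0})$), invoke subadditivity of $\mu$ on the split supplied by~(\ref{eq aut au -2gamma}), use symmetry of $\mathcal{L}_\gamma$ to merge its two contributions, and take care that the induced norm defining $\mu$ is matched to $\|\cdot\|_2$ when specializing; everything else is a routine norm estimate applied to the explicit solution~(\ref{eq eua expression}).
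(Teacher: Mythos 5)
Your proposal is correct and follows essentially the same route as the paper: Item~1 from the congruence/positive-semidefiniteness structure of the explicit solution~(\ref{eq eua expression}), and Item~2 by taking spectral norms, invoking $\Vert e^{As}\Vert\leq e^{\mu(A)s}$ and subadditivity of $\mu$ on the split $\mathcal{A}_u=\mathcal{A}_{u_0}+\mathcal{L}_\gamma$, and using positive semidefiniteness of $\mathcal{L}\otimes P_u(\infty)$ with its kernel vector $1_N\otimes v$ to get $\mu_2(\mathcal{L}_\gamma)=0$. If anything you are slightly more careful than the paper, which writes the subadditivity step as an equality and leaves $\mu(\mathcal{L}_\gamma^T)=\mu(\mathcal{L}_\gamma)$ implicit.
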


\begin{proof}
Item 1):   By combining  (\ref{eq eua expression}), these  conclusions can be easily  drawn.

Item 2): Taking the spectral norm of  (\ref{eq eua expression}),  it follows
\begin{equation}\label{eq lemma compare bound 1}
\begin{aligned}
\Vert E_{ue}(t) \Vert_2 &\leq  \Vert E_{ue}(h)\Vert_2 \Vert e^{\mathcal{A}_u(t-h)} \Vert_2\Vert e^{\mathcal{A}^T_u(t-h)} \Vert_2\\
& ~~~~+  \Vert\Delta D \Vert_2   \int_h^t  \Vert e^{\mathcal{A}_u(t-\tau)} \Vert_2\Vert e^{\mathcal{A}^T_u(t-\tau)} \Vert_2 d\tau.
\end{aligned}
\end{equation}
 Based on Lemma  \ref{lemma mu logarithmic}, one has
\begin{equation}\label{eq lemma compare bound 2}
\begin{aligned}
 \Vert e^{\mathcal{A}_u(t-\tau)} \Vert_2\Vert e^{\mathcal{A}^T_u(t-\tau)} \Vert_2 &\leq  e^{\mu(\mathcal{A}_u)(t-\tau)}e^{\mu(\mathcal{A}^T_u)(t-\tau)}\\
 & \leq e^{(\mu(\mathcal{A}_u)+\mu(\mathcal{A}^T_u))(t-\tau)}.
\end{aligned}
\end{equation}
Similarly to   Theorem \ref{theorem bounds estimation error covariance},   based on the definition that  $\mathcal{A}_{u_0} =  G_{d,u} - \gamma_{u_0} (\mathcal{L}\otimes P_u(\infty))$, $ \mathcal{L}_{\gamma} = - (\gamma_{u}-\gamma_{u_0}  ) (\mathcal{L}\otimes P_u(\infty))$, and $\mathcal{A}_{u} =  \mathcal{A}_{u_0} + \mathcal{L}_{\gamma}$, $ e^{(\mu(\mathcal{A}_u)+\mu(\mathcal{A}^T_u))(t-\tau)}$ can be rewritten as
\begin{equation}\label{eq lemma compare bound 3}
\begin{aligned}
  e^{(\mu(\mathcal{A}_u)+\mu(\mathcal{A}^T_u))(t-\tau)} =  e^{(\mu(\mathcal{A}_{u_0})+\mu(\mathcal{A}^T_{u_0}))(t-\tau)}e^{2\mu(\mathcal{L}_{\gamma})(t-\tau)}.
\end{aligned}
\end{equation}
By combining (\ref{eq lemma compare bound 2}) and (\ref{eq lemma compare bound 3}), (\ref{eq lemma compare bound 1}) can be rewritten as
\begin{equation}
\begin{aligned}
\Vert E_{ue}(t) \Vert_2 &\leq  \Vert E_{ue}(h)\Vert_2  e^{(\mu(\mathcal{A}_{u_0})+\mu(\mathcal{A}^T_{u_0}))(t-h)}e^{2\mu(\mathcal{L}_{\gamma})(t-h)}\\
& ~~~~+  \Vert\Delta D \Vert_2   \int_h^t   e^{(\mu(\mathcal{A}_{u_0})+\mu(\mathcal{A}^T_{u_0}))(t-\tau)}\\
&~~~~\times e^{2\mu(\mathcal{L}_{\gamma})(t-\tau)} d\tau.
\end{aligned}
\end{equation}
Define $\bar \mu(\mathcal{A}_{u_0}) =\mu(\mathcal{A}_{u_0})+\mu(\mathcal{A}^T_{u_0})$, and (\ref{eq eua(t) bound}) can be obtained.
Based on Assumption \ref{ass communication graph}, the  Laplacian  matrix $\mathcal{L}$   is positive semi-definite with one  zero eigenvalue.
According to the definition of  $\mu_2(\cdot)$ and the property of   $\mathcal{L}_{\gamma}$, it can be deduced  that
$\mu_2(\mathcal{L}_{\gamma}) = 0$.

\end{proof}

%
%

\begin{remark}
On the one hand, Theorem  \ref{theorem relation sigmau sigmaa}  shows that the  relative relation between    $\Sigma_u(t)$ and $\Sigma_e(t)$
can be determined by  utilizing  $\Delta D$ and $E_{ue}(h)$.
On the other hand,  the evolution process of
the upper bound of the spectral norm of the difference is  presented.
Item~2 provides a distinct separation of
  the consensus parameter in  the upper bound.
  The  logarithmic norm
 $\mu_2(\mathcal{L}_{\gamma})=0$ means that  the magnitude of the coefficient of the exponential term is influenced by  the logarithmic norm of   $\mathcal{A}_{u_0}$ associated with   $\gamma_{u_0}$, and  the  consensus parameter   $\gamma_u$ satisfying  $\gamma_u\geq \gamma_{u_0}$   have no  impact on the bound. If $\Delta D = 0$,  Item~\ref{theorem relation item e bounds} show the convergence rate between the nominal performance index $\Sigma_u(t)$ and the estimation error covariance $\Sigma_e(t)$.
\end{remark}

\section{Simulations}\label{sec simulations}


{\color{blue}
This section conducts  numerical experiments  on  vehicle   tracking to  validate the effectiveness of the theoretical results  regarding  the effects of modeling errors on the filter performance.
Consider the vehicle  moving on a  plane. In engineering applications, we primary focus on  the vehicle's   positions and  velocities. Let
$x(t) = [x^T_1(t), x^T_2(t), x^T_3(t), x^T_4(t)]^T$  denote  the state,  where $x_1(t)$ and $x_3(t)$  represent   the  horizontal    and   vertical   velocities, respectively, and $x_2(t)$ and $x_4(t)$  represent   the horizontal  and   vertical positions, respectively. The vehicle dynamics is  described by
$A = [\bar A_{11}, 0_{2\times 2}; 0_{2\times 2}, \bar A_{22}]$, where $\bar A_{11} = \bar A_{22} = [0, 0; 1, 0]$. This  represents the vehicle  undergoing  uniform liner motion with noise disturbances.
The sensor network consists of six sensors to measure the vehicle's state, and its  communication topology
is   illustrated   in Fig.~\ref{fig communication topology}.}

\begin{figure}[!htb]
\centering
{\includegraphics[width=1.6in]{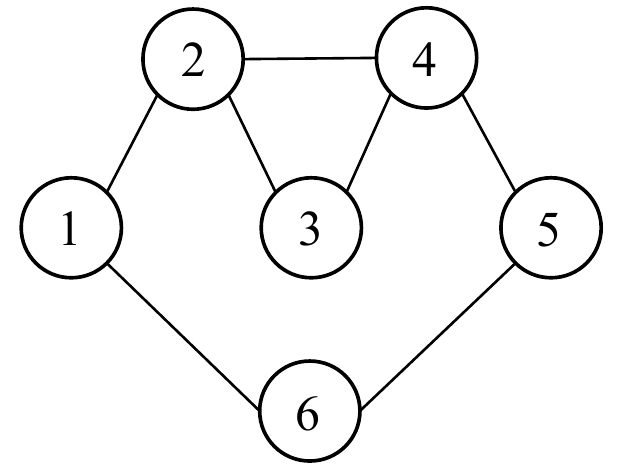}}
\caption{Illustration of  the communication topology.}
\label{fig communication topology}
\end{figure}

%
The process noise covariance is  given  as
~~$Q = \text{diag}\{0.03, 0.03, 0.03, 0.03\}$. Two kinds of sensors are deployed in the sensor network, and their measurement  matrices are defined as
$C_1 =  [0, 1, 0, 0]$, $C_2 =  [0, 2, 0, 0]$, $C_3 =  [0, 3, 0, 0]$, $C_4 =  [0, 0, 0, 1]$, $C_5 =  [0, 0, 0, 2]$,  and $C_6 =  [0, 0, 0, 3]$.
The measurement noise covariances  are set as $R_i = 0.2, ~ i=1,\ldots,6$.
The initial state is  $x(0) = [0.2; 1; 0.2; 1]$,  and the initial covariance is
$\Sigma_{i,e}(0) = \text{diag}\{0.1, 0.1, 0.1, 0.1\}$.
The Monte Carlo method  is adopted to evaluate the performance of the  distributed filter, and the mean square error (MSE)  is utilized  as a metric, given by $\text{MSE}(t) =  \frac{1}{MN}\sum^{N}_{i=1}\sum^{M}_{l=1}\Vert \hat x^{(l)}_{i,u}(t)-x^{(l)}(t) \Vert^2_2,$
where  $N$ is the sensor number  and $M$ is the trail number.

To demonstrate  the theoretical results, three  cases are considered. Case 1 validates the  theoretical theories of  the bounds of the estimation error covariance, the bounds of the nominal performance index, the estimation error covariance, and  the collective observability condition. Case 2 illustrates the divergence of the distributed filter caused by an  incorrect process noise covariance. Case 3 shows  the relations between $\Sigma_u(t)$ and $\Sigma_e(t)$.

Case 1:   Consider  the bounds of the estimation error covariance.
The vehicle dynamics is  described by
$A = [\bar A_{11}, 0; 0, \bar A_{22}]$, where $\bar A_{11} = \bar A_{22} = [-0.1, 0; 1, -0.1]$,  $A_u = A + 0.1 I_4$,
$C_{1,u} =  [0, 1.1, 0, 0]$, $C_{2,u} =  [0, 2.1, 0, 0]$, $C_{3,u} =  [0, 3.1, 0, 0]$, $C_{4,u} =  [0, 0, 0, 1.1]$, $C_{5,u} =  [0, 0, 0, 2.1]$,  $C_{6,u} =  [0, 0, 0, 3.1]$, $Q_u = 0.05 I_4$, and $R_{i,u} = 0.3, ~i=1,\ldots,6$.
%
%
Fig. \ref{fig_as_gamma}   displays    $\text{Tr}(\bar \Sigma_e)$,
the upper bound~1 of $\text{Tr}(\bar\Sigma_e)$  ($\text{Tr}(\bar\Sigma_u)+\rho(\gamma_u)$) referring to~(\ref{eq theorem bounds rho}),
the upper bound~2 of $\text{Tr}(\bar\Sigma_e)$  ($\text{Tr}(\bar\Sigma_u)+\rho(\gamma_u-0.5)$) referring to~(\ref{eq theorem bounds rho}) and (\ref{eq theorem gamma asymptotic}), the lower bound of $\text{Tr}(\bar\Sigma_u)$ referring  to (\ref{eq lower bound of barsigmau}), and MSE with the  increasing  consensus parameter $\gamma_u$. It is shown that
\begin{enumerate}
\item  The curves of   MSE   and   $\text{Tr}(\bar \Sigma_e)$  exhibit nearly overlapping,  demonstrating  the  correctness    of   the theoretical value $\bar \Sigma_e$.
\item  As  the consensus parameter $\gamma_u$ increases,  five indices decline. Moreover, both   MSE and    $\text{Tr}(\bar \Sigma_e)$  are constrained by the  upper   bound~1 and the upper bound~2,  and the asymptotic decay of the upper bound is demonstrated,
    as established  in Theorem~\ref{theorem bounds estimation error covariance}.
%

\end{enumerate}

\begin{figure}[!htb]
\centering
{\includegraphics[width=3.2in]{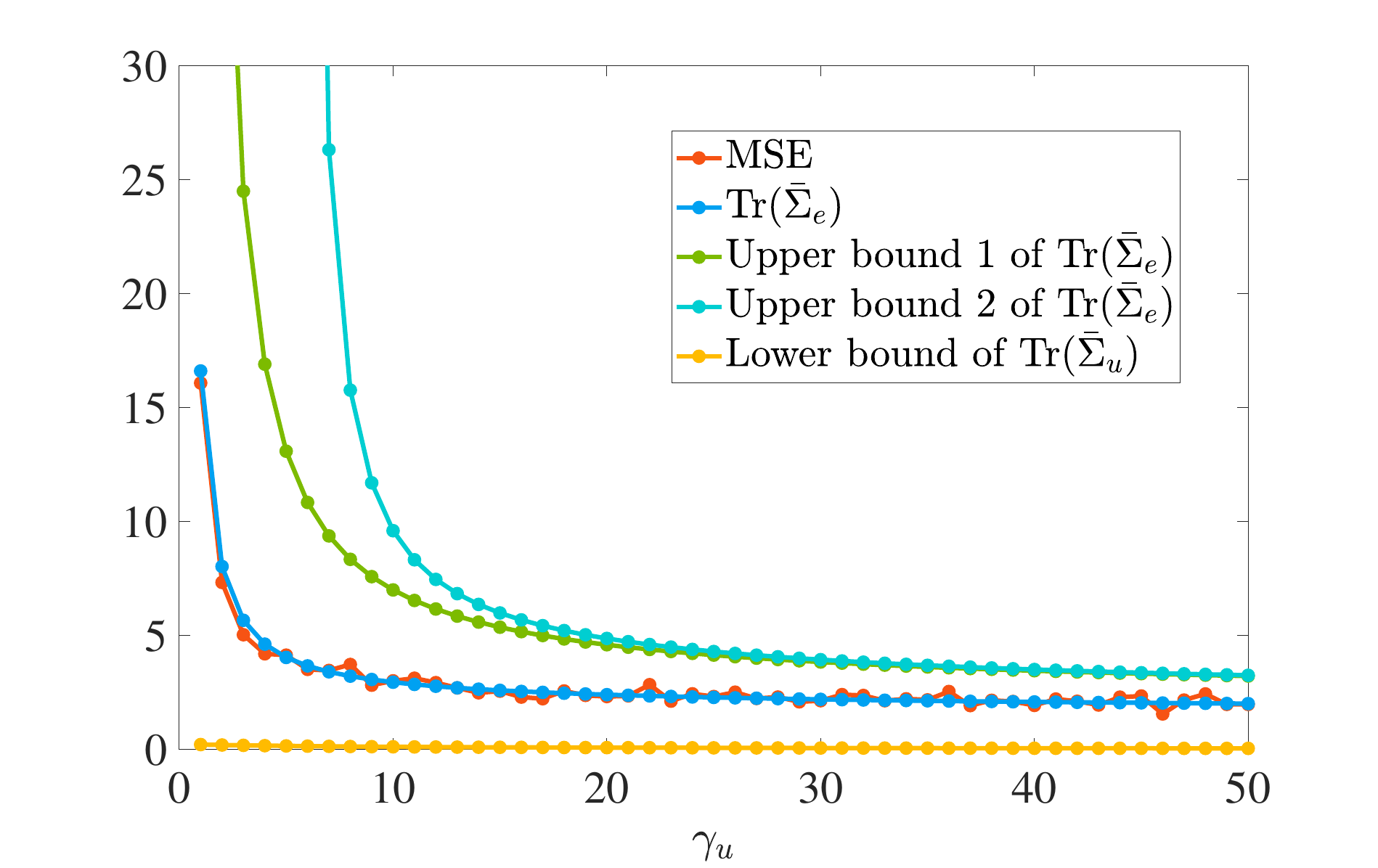}}
\caption{$\text{MSE}$,  $\text{Tr}(\bar \Sigma_e)$, the upper bound~1 of $\text{Tr}(\bar\Sigma_e)$,  the upper bound~2 of $\text{Tr}(\bar\Sigma_e)$, and the lower bound of $\text{Tr}(\bar\Sigma_u)$
with the increasing consensus parameter  in Case~1.}
\label{fig_as_gamma}
\end{figure}

\begin{figure}[!htb]
\centering
{\includegraphics[width=3.2in]{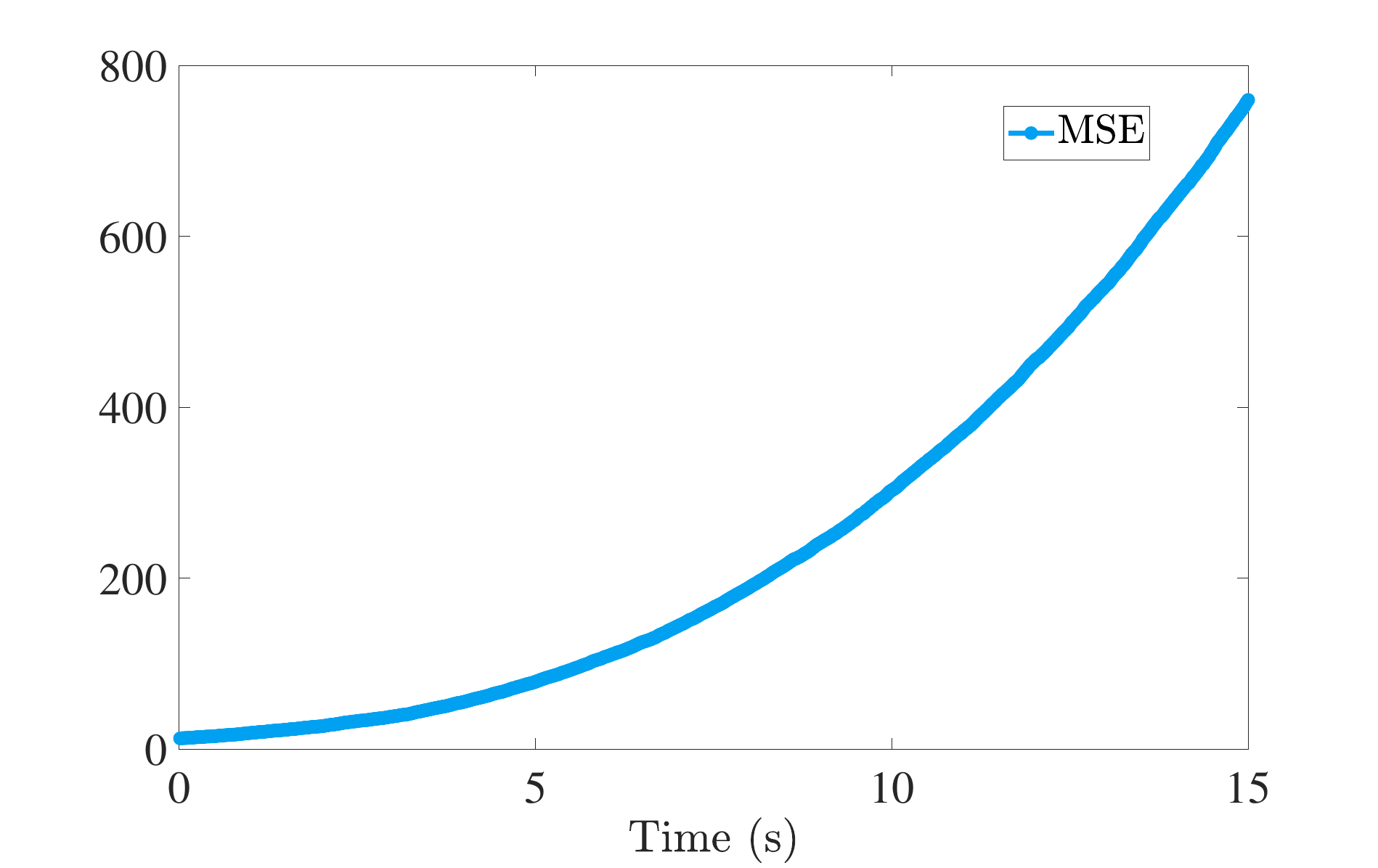}}
\caption{Illustration of the divergence caused by  the incorrect  covariance with the increasing time in Case 2.}
\label{fig_divergence}
\end{figure}

Case 2:   To verify  the effectiveness of
Theorem  \ref{theorem incorrect Q divergence},
the divergence of the  distributed   filter  resulting from   the  incorrect  covariance are considered.  Let us establish
the nominal process noise covariance  as  $Q_u = \text{diag}\{0, 0.03, 0.03, 0.03\}$, and set the consensus parameter  as   $\gamma_u = 10$.  First,  compute the eigenvalues and eigenvectors of  $A_u$.
All eigenvalues of  $A^T_u$   are  $0$, with  the corresponding eigenvectors are  $[1, 0, 0, 0]^T$, $[-1, 0, 0, 0]^T$, $[0, 0, 1, 0]^T$, and $[0, 0, -1, 0]^T$.
Consider the condition  $A^T_ue =rje$    and  $Q_ue=0$. From this,  it can be deduced that $r=0$, and $e=[1, 0, 0, 0]^T$ or  $e = [-1, 0, 0, 0]^T$.  Ultimately,  it can be calculated that  $\mathcal{A}^T_u(1_N \otimes e) = 0$.  Fig. \ref{fig_divergence}
shows  MSE of the distributed filter, and it can be observed   that  the actual estimation error has diverged due to utilizing  an incorrect  process noise covariance (See Theorem~\ref{theorem incorrect Q divergence}).

Case 3:  Consider the relations between $\Sigma_u(t)$ and $\Sigma_e(t)$.    The nominal parameters are set as $A_u = A$, $C_{i, u}= C_i$,   $Q_u = \text{diag}\{0.1, 0.1, 0.1, 0.1\}$, $R_{i,u} = 0.3, ~i= 1, 4$, and $R_{i,u} = 0.2, ~i= 2, 3, 5, 6$. The initial estimation error covariance and the initial performance index are chosen as   the identity matrix.  As stated in  (\ref{eq delta D}), it follows~$\Delta D>0$.

\begin{figure}[!htb]
\centering
{\includegraphics[width=3.2in]{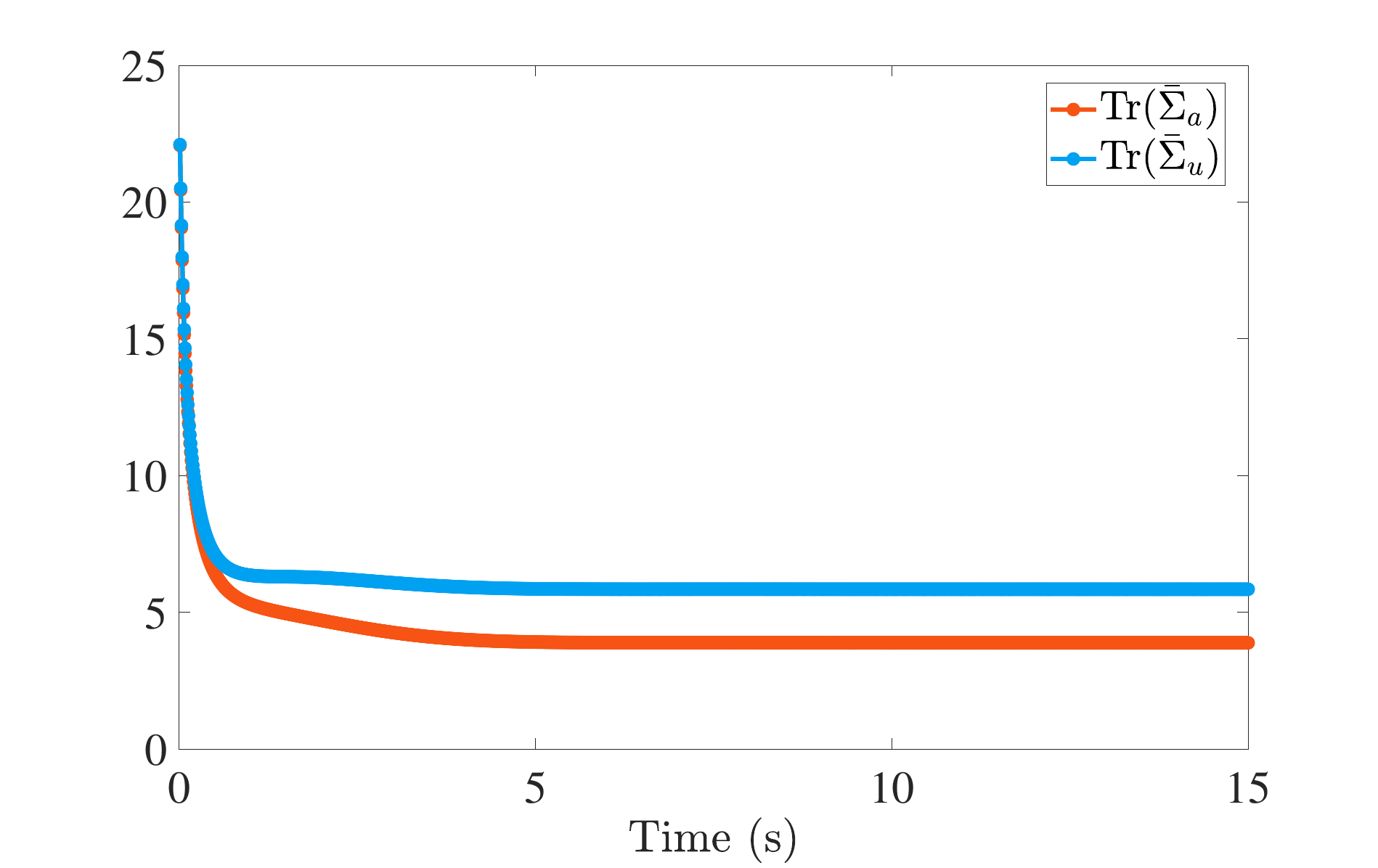}}
\caption{Illustration of the relations between  $\Sigma_u(t)$ and $\Sigma_e(t)$ with the increasing time in Case 3.}
\label{fig_deltaDge0}
\end{figure}

Fig. \ref{fig_deltaDge0}  exhibits
the trace of the estimation error covariance and  the nominal  performance index.  This figure illustrates  that the nominal  performance index  provides an upper  bound   of the estimation error covariance, which aligns  with the theoretical results (See Theorem \ref{theorem relation sigmau sigmaa}). It means that the estimation error covariance  of the distributed filter can be evaluated by utilizing  the nominal performance index, if the  nominal  parameters are appropriately designed.

\section{Conclusions}\label{sec conclusion}

In this paper, the performance analysis of the distributed continuous-time filter  in the presence of the modeling errors is conducted  from  two primary  aspects.  On the one hand,
  the convergence condition  and  the corresponding  convergence  analysis of   two performance indices   in the presence of  the modeling errors are provided. Then,   it is  demonstrated that   an incorrect noise covariance can lead to the divergence of the distributed filter.  On the other hand, this paper focuses on  the  performance evaluation  from the nominal performance index and the estimation error covariance.  The  bounds of the estimation error   are  derived by utilizing  the nominal performance index and the  model deviation information.   The relative magnitude relations between them are also presented. These results  shed light on
   the performance of the distributed filter in the presence of the modeling errors  and  provide guidance for  engineering applications.  In the future, we aim to   develop the corresponding distributed  algorithms  to handle the effects of the modeling errors.

\begin{appendices}

\section{Graph Theory}\label{sec graph theory}
A sensor network's nodes and  communication links can be represented as  the communication topology ${\mathcal{G}}(\mathcal{V},\mathcal{E})$, where the node set and the  edge set  are denoted as  $\mathcal{V}=\{1,2,\ldots,N\}$ and $\mathcal{E} \subseteq \mathcal{V} \times \mathcal{V}$, respectively.
For  $i, j \in \mathcal{V}$,  if the information can be transmitted from node $j$ to node $i$,
node $j$ is  called a neighbor of node $i$, denoted as
$(j,i)$.
The  neighbor set  of node $i$ is  represented   as  $\mathcal{N}_i =\{j|(j,i)\in \mathcal{V}\}$,  and  $\vert\mathcal{N}_i\vert$ is  the  cardinality of the neighbors of node $i$.
The adjacent matrix is   $S = [s_{ij}]_{N\times N}$, where
 $s_{ij} = 1$ if $(j,i)\in \mathcal{E}$, and   $s_{ij} = 0$ otherwise.
 The  Laplacian matrix  is  defined as   $\mathcal{L} = D-S$,  where $D = \text{diag}\{\vert\mathcal{N}_1\vert,\ldots,\vert\mathcal{N}_N\vert \}$. If $(i,j)\in \mathcal{E}$  implies    $(j,i) \in \mathcal{E}$, the edge $(i,j)$ is  called  undirected. If every edge  is undirected, the  communication graph is termed  undirected.   A directed path  from node $i_1$ to node $i_m$ exists in the graph $G$,
  if there is a sequence of connected edges $(i_k, i_{k+1}), k=1,\ldots,m-1$.
 The undirected communication graph is   connected if there exists a path between every two   nodes.

\section{Supporting  Lemmas}

\begin{lemma}\cite{soderlind2006logarithmic}\label{lemma mu logarithmic}
For matrices  $A$ and $B$ with the appropriate dimensions,   the following   properties of the logarithmic norm $\mu(\cdot)$  hold:
\begin{enumerate}
\item  $\mu(\gamma A) = \gamma \mu(A)$ for scalar $\gamma>0$,
\item  $\mu(A+B) \leq \mu(A) + \mu(B)$,
\item  $\Vert e^{tA}\Vert \leq e^{t\mu(A)}$ for $t\geq 0$.
\end{enumerate}

\end{lemma}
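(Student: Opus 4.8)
The plan is to handle the three items separately: items (1) and (2) will follow directly by elementary manipulation of the defining limit $\mu(A)=\lim_{h\to 0^+}(\|I+hA\|-1)/h$, while item (3) carries the real content and I would prove it via the product formula $e^{tA}=\lim_{n\to\infty}(I+\tfrac tn A)^n$ together with submultiplicativity of the induced norm. As a preliminary step I would record that the defining limit exists and is finite: the map $h\mapsto\|I+hA\|$ is convex (a norm composed with an affine map), so $h\mapsto(\|I+hA\|-1)/h$ is nondecreasing on $h>0$ and therefore has a one-sided limit as $h\to 0^+$; the bounds $1-h\|A\|\le\|I+hA\|\le 1+h\|A\|$ give $|\mu(A)|\le\|A\|$, and in particular $\|I+sA\|=1+s\mu(A)+o(s)$ as $s\to 0^+$.

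For (1), I would fix $\gamma>0$ and substitute $h=s/\gamma$ in $\mu(\gamma A)=\lim_{h\to 0^+}(\|I+h\gamma A\|-1)/h$; since $s\to 0^+$ exactly when $h\to 0^+$, this gives $\mu(\gamma A)=\gamma\lim_{s\to 0^+}(\|I+sA\|-1)/s=\gamma\,\mu(A)$. For (2), I would use the identity $I+h(A+B)=\tfrac12(I+2hA)+\tfrac12(I+2hB)$ and the triangle inequality to get $\|I+h(A+B)\|-1\le\tfrac12(\|I+2hA\|-1)+\tfrac12(\|I+2hB\|-1)$; dividing by $h>0$ and letting $h\to 0^+$ (noting $2h\to 0^+$ as well) yields $\mu(A+B)\le\mu(A)+\mu(B)$.

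For (3), the key step is: using $\|I+sA\|=1+s\mu(A)+o(s)$, I would take logarithms to obtain $n\log\|I+\tfrac tn A\|=n\log\big(1+\tfrac{t\mu(A)}{n}+o(\tfrac1n)\big)\to t\mu(A)$, hence $\|I+\tfrac tn A\|^n\to e^{t\mu(A)}$ for each fixed $t\ge 0$. Then, combining the product formula $e^{tA}=\lim_{n\to\infty}(I+\tfrac tn A)^n$, continuity of the norm, and submultiplicativity $\|(I+\tfrac tn A)^n\|\le\|I+\tfrac tn A\|^n$, I conclude $\|e^{tA}\|=\lim_n\|(I+\tfrac tn A)^n\|\le\lim_n\|I+\tfrac tn A\|^n=e^{t\mu(A)}$. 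An alternative route is to set $\phi(t)=\|e^{tA}\|$, bound the upper right Dini derivative by $D^+\phi(t)\le\mu(A)\phi(t)$ from $\phi(t+h)\le\|e^{hA}\|\phi(t)\le(\|I+hA\|+O(h^2))\phi(t)$, and then obtain $\phi(t)\le\phi(0)e^{t\mu(A)}=e^{t\mu(A)}$ via a Gronwall/monotonicity argument applied to $\phi(t)e^{-t\mu(A)}$. The hard part will be the bookkeeping in (3): ensuring the $o(s)$ remainder in the definition behaves well enough to survive exponentiation to the $n$-th power, and justifying the interchange of limit and norm — the convexity/monotonicity remark above is precisely what licenses the first, while continuity of $\|\cdot\|$ handles the second.
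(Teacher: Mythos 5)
Your proof is correct, but note that the paper does not actually prove this lemma: it is stated as a known result and attributed directly to the cited reference on logarithmic norms, so there is no in-paper argument to compare against. Your self-contained derivation is sound at every step. The preliminary observation that $h\mapsto\Vert I+hA\Vert$ is convex (hence its difference quotients are nondecreasing and the defining one-sided limit exists and is finite, with $\vert\mu(A)\vert\le\Vert A\Vert$) is exactly the right foundation; the substitution $h=s/\gamma$ for item (1) and the convex-combination identity $I+h(A+B)=\tfrac12(I+2hA)+\tfrac12(I+2hB)$ for item (2) are the standard clean arguments. For item (3), both of your routes work: the product-formula argument correctly handles the $o(1/n)$ remainder (for fixed $t$ the error after taking $n$-th powers and logarithms is $n\cdot o(1/n)\to 0$), and the interchange of limit and norm is justified by continuity of the norm; the Dini-derivative/Gronwall alternative is the argument more commonly seen in the logarithmic-norm literature and is equally valid. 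The only cosmetic omission is the trivial case $t=0$ in item (3), which holds since $\Vert I\Vert=1$ for an induced norm. In short: where the paper buys the result by citation, you have supplied a complete elementary proof, and nothing in it needs repair.
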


\begin{lemma}\cite{toda1978performance}\label{lemma tr AB trA B}
For matrices  $A$ and $B$ with the appropriate dimensions,   if
$A>0$ and $B\geq 0$, then  $\text{Tr}(AB)\leq  \text{Tr}(A) \bar\sigma(B) \leq  \text{Tr}(A)\text{Tr}(B),$ and $\text{Tr}(A^{-1}B)\geq (\bar\sigma(A))^{-1} \text{Tr}(B) \geq  [\text{Tr}(A)]^{-1}\text{Tr}(B).$
%
%
\end{lemma}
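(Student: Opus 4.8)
The plan is to derive both chains of inequalities from a single elementary fact — that $\text{Tr}(MN)\ge 0$ whenever $M$ and $N$ are symmetric positive semi-definite — combined with sandwiching $B$, and later $A^{-1}$, between scalar multiples of the identity in the L\"owner order. Throughout, $A$ and $B$ are treated as symmetric (Hermitian), so that their singular values coincide with their eigenvalues; in particular $\bar\sigma(B)=\bar\lambda(B)$ and $\bar\sigma(A)=\bar\lambda(A)$, and $B\le\bar\sigma(B)I$, while $A^{-1}\ge(\bar\sigma(A))^{-1}I$ since $\underline\lambda(A^{-1})=1/\bar\lambda(A)$.

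First I would record the auxiliary fact and its corollary: for $M\ge 0$, $\text{Tr}(MN)=\text{Tr}(M^{1/2}NM^{1/2})\ge 0$ when $N\ge 0$, hence if $A\ge 0$ and $M\ge N$ then $\text{Tr}(AM)\ge\text{Tr}(AN)$ (apply the fact to $A$ and $M-N$). Applying this with $M=\bar\sigma(B)I$ and $N=B$ gives $\text{Tr}(AB)\le\bar\sigma(B)\,\text{Tr}(A)$, the first inequality in the first chain. For the second, since $B\ge 0$ all its eigenvalues are nonnegative, so $\bar\sigma(B)=\lambda_1(B)\le\sum_i\lambda_i(B)=\text{Tr}(B)$; multiplying through by $\text{Tr}(A)>0$ (positivity from $A>0$) yields $\bar\sigma(B)\,\text{Tr}(A)\le\text{Tr}(A)\text{Tr}(B)$.

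For the second chain I would use the auxiliary fact directly with $M=A^{-1}-(\bar\sigma(A))^{-1}I\ge 0$ and $N=B\ge 0$, obtaining $\text{Tr}(A^{-1}B)-(\bar\sigma(A))^{-1}\text{Tr}(B)=\text{Tr}(MB)\ge 0$, i.e.\ $\text{Tr}(A^{-1}B)\ge(\bar\sigma(A))^{-1}\text{Tr}(B)$. Finally, because $A>0$ every eigenvalue of $A$ is positive, so $\bar\sigma(A)=\lambda_1(A)\le\text{Tr}(A)$ and therefore $(\bar\sigma(A))^{-1}\ge(\text{Tr}(A))^{-1}$; multiplying by $\text{Tr}(B)\ge 0$ gives the last inequality.

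I do not anticipate a real obstacle: the whole argument is an exercise in the L\"owner order and trace positivity. The only points needing care are keeping the symmetry/Hermiticity standing assumption explicit so that ``eigenvalue'', ``singular value'' and ``positive (semi-)definite'' are mutually consistent (which holds here since all matrices in play are covariance-type), and tracking where strictness is used — namely $A>0$ to guarantee invertibility, $\text{Tr}(A)>0$, and $\bar\sigma(A)>0$.
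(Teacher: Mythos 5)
Your proof is correct. Note, however, that the paper does not prove this lemma at all---it is imported by citation from Toda and Patel \cite{toda1978performance} and used as a black box---so there is no in-paper argument to compare against. Your derivation (trace positivity of products of positive semi-definite matrices, plus sandwiching $B\le\bar\sigma(B)I$ and $A^{-1}\ge(\bar\sigma(A))^{-1}I$ in the L\"owner order, plus $\bar\lambda\le\text{Tr}$ for positive semi-definite matrices) is the standard route and is complete; you are also right to flag that the statement implicitly assumes $A$ and $B$ are symmetric, since otherwise singular values and eigenvalues need not coincide and the L\"owner comparisons would not be available---an assumption that holds for every instance in which the paper invokes the lemma (e.g.\ $\bar\Sigma_u>0$ and $K_{d,u}R_{d,u}K_{d,u}^T+U_N\otimes Q_u\ge 0$ in Proposition~\ref{proposition lower bound trsigmau}).
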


\begin{lemma}\cite{zhou1998essentials}\label{lemma sylvester equation}
For matrices  $A$, $B$, and $C$ with the appropriate dimensions,   consider the  Sylvester equation  $AX+XB =C.$
There exists a unique solution $X$ if and only if $\lambda_i(A)+\lambda_j(B)\neq 0$ for $i = 1,\ldots,m$ and $j=1,\ldots,n$.
\end{lemma}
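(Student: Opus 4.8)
The plan is to convert the matrix equation into an ordinary square linear system by vectorization, and then characterize its solvability through the spectrum of a Kronecker sum. Let $A\in\mathcal{R}^{m\times m}$, $B\in\mathcal{R}^{n\times n}$, $C\in\mathcal{R}^{m\times n}$, and view $L(X)=AX+XB$ as a linear map from $\mathcal{R}^{m\times n}$ to itself. First I would apply the identity $\text{vec}(XYZ)=(Z^T\otimes X)\text{vec}(Y)$ (already used in the proof of Theorem~\ref{theorem bounds estimation error covariance}) with $Y=X$, giving $\text{vec}(AX)=(I_n\otimes A)\text{vec}(X)$ and $\text{vec}(XB)=(B^T\otimes I_m)\text{vec}(X)$. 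Hence $AX+XB=C$ is equivalent to $\mathcal{M}\,\text{vec}(X)=\text{vec}(C)$ with $\mathcal{M}=I_n\otimes A+B^T\otimes I_m\in\mathcal{R}^{mn\times mn}$. Since $\text{vec}(\cdot)$ is a bijection between $\mathcal{R}^{m\times n}$ and $\mathcal{R}^{mn}$, the Sylvester equation has a unique solution $X$ if and only if $\mathcal{M}$ is nonsingular; this settles both implications at once, because a singular $\mathcal{M}$ has a nontrivial kernel and therefore admits, for each $C$, either no solution or infinitely many.

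The second step is to compute the eigenvalues of $\mathcal{M}$. I would use Schur triangularization: write $A=U T_A U^{*}$ and $B^{T}=V T_B V^{*}$ with $U,V$ unitary and $T_A,T_B$ upper triangular, whose diagonal entries are $\lambda_1(A),\ldots,\lambda_m(A)$ and $\lambda_1(B),\ldots,\lambda_n(B)$ respectively, each counted with algebraic multiplicity (the latter list uses that $B$ and $B^{T}$ have the same eigenvalues). Applying the mixed-product rule $(P\otimes Q)(R\otimes S)=(PR)\otimes(QS)$ to $W=V\otimes U$ yields $W^{*}\mathcal{M}W=I_n\otimes T_A+T_B\otimes I_m$, which is upper triangular with diagonal entries $\lambda_i(A)+\lambda_j(B)$, $i=1,\ldots,m$, $j=1,\ldots,n$. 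Therefore the eigenvalues of $\mathcal{M}$ are exactly these numbers, so $\mathcal{M}$ is nonsingular precisely when $\lambda_i(A)+\lambda_j(B)\neq 0$ for all such $i,j$. Combined with the first step this proves the lemma, and when the condition holds the unique solution is $X=\text{vec}^{-1}\!\big(\mathcal{M}^{-1}\text{vec}(C)\big)$.

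The only nonroutine ingredient is the eigenvalue formula for the Kronecker sum $I_n\otimes A+B^{T}\otimes I_m$; the remainder is bookkeeping with $\text{vec}$ and $\otimes$, so this is the step I would state most carefully. If one prefers to avoid complex unitary similarity, an equivalent route is to observe that $I_n\otimes A$ and $B^{T}\otimes I_m$ commute, hence are simultaneously triangularizable over $\mathbb{C}$; since the eigenvalues of $I_n\otimes A$ are the $\lambda_i(A)$ (each repeated $n$ times) and those of $B^{T}\otimes I_m$ are the $\lambda_j(B)$ (each repeated $m$ times), paired consistently along a common eigenbasis, the eigenvalues of the sum are the $\lambda_i(A)+\lambda_j(B)$. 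Either argument yields the same conclusion and completes the proof.
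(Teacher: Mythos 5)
Your proof is correct. The paper does not prove this lemma at all --- it is stated as a supporting result cited from Zhou's \emph{Essentials of Robust Control} --- so there is no in-paper argument to compare against; your vectorization of $AX+XB=C$ into $(I_n\otimes A+B^{T}\otimes I_m)\,\mathrm{vec}(X)=\mathrm{vec}(C)$, followed by the Schur-triangularization computation showing that the Kronecker sum has eigenvalues $\lambda_i(A)+\lambda_j(B)$, is exactly the standard textbook proof of this fact, and both directions of the equivalence are handled properly (including the observation that a singular coefficient matrix precludes uniqueness for every right-hand side). The one step worth stating carefully, as you note, is the eigenvalue formula for the Kronecker sum; your unitary-similarity argument via $W=V\otimes U$ and the mixed-product rule establishes it cleanly, and is preferable to the commuting-triangularization alternative, whose eigenvalue pairing requires extra care to justify.
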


\end{appendices}

\bibliographystyle{plain}        
\bibliography{autosam}           

\end{document}